\useunder{\uline}{\ul}{}
\newtheorem{definition}{Definition}
\newtheorem{theorem}{Theorem}
\newtheorem{proposition}{Proposition}
\theoremstyle{remark}
\newtheorem{example}{Example}
\newtheorem*{remark}{Remark}
\begin{document}
\bstctlcite{MyBSTcontrol}

\title{PADER: Paillier-based Secure Decentralized Social Recommendation}
\author{
Chaochao Chen, Jiaming Qian, Fei Zheng$^*$, Yachuan Liu

\IEEEcompsocitemizethanks{
\IEEEcompsocthanksitem C. Chen, F. Zheng, and Y. Liu are with the College of Computer Science and Technology, Zhejiang University, Hangzhou, China. \textit{Email: }zjuccc@zju.edu.cn, zfscgy2@zju.edu.cn, 3190105123@zju.edu.cn.
\IEEEcompsocthanksitem J. Qian is with the School of Software Technology, Zhejiang University, Ningbo, China. \textit{Email: }22451051@zju.edu.cn.
}

}

\markboth{Journal of \LaTeX\ Class Files,~Vol.~18, No.~9, September~2020}%
{How to Use the IEEEtran \LaTeX \ Templates}

\IEEEtitleabstractindextext{%
\begin{abstract}
The prevalence of recommendation systems also brings privacy concerns to both the users and the sellers, as centralized platforms collect as much data as possible from them.
To keep the data private, we propose PADER, a Paillier-based secure decentralized social recommendation system.
In this system, the users and the sellers are nodes in a decentralized network.
The training and inference of the recommendation model are carried out securely in a decentralized manner, without the involvement of a centralized platform.
To this end, we apply the Paillier cryptosystem to the SoReg (Social Regularization) model, which exploits both user's ratings and social relations.
We view the SoReg model as a two-party secure polynomial evaluation problem and observe that the simple bipartite computation may result in poor efficiency.
To improve efficiency, we design secure addition and multiplication protocols to support secure computation on any arithmetic circuit, along with an optimal data packing scheme that is suitable for the polynomial computations of real values.
Experiment results show that our method only takes about one second to iterate through one user with hundreds of ratings, and training with $\sim 500K$ ratings for one epoch only takes $<3$ hours, which shows that the method is practical in real applications. The code is available at \emph{\url{https://github.com/GarminQ/PADER}}.
\end{abstract}
\begin{IEEEkeywords}
Paillier Cryptosystem, Secure Computation, Recommendation System
\end{IEEEkeywords}
}

\maketitle

\section{Introduction}
\IEEEPARstart{R}{ecommendation} systems are widely employed in various domains nowadays, e.g., shopping websites, social platforms, online media, and offline stores.
The mechanism of recommendation systems is to recommend new items (e.g., goods, movies) based on the user-item interaction (e.g., purchases, ratings) history and other side-informations of users and items (e.g., users' social data, description of items).
In the conventional recommendation approach, the e-commerce platform gathers data from both users and sellers, then trains the recommendation model in a centralized manner.
For example, prominent online shopping platforms such as Amazon or Taobao require the seller to post item information on the platform's website, while users must register on the platform to purchase goods.
Thus, all transactions between the seller and the user must go through the platform, and the interaction data is kept by the platform to train the recommendation model.
While the aforementioned centralized recommendation is efficient and has a high performance, it also causes privacy problems for both the user and the seller.
The data collected by the platform could be sensitive to the user and valuable to the seller.
However, in reality, users and sellers still have to compromise their private data in exchange for convenience.

The most essential data for the recommendation system is the user-item interaction, e.g., clicks, purchases, and ratings.
While some methods completely rely on the user-item interaction to make recommendations~\cite{2007pmf,koren2009mf,2016_wide_and_deep,hexiangnan2017ncf,2018din}, others also leverage side-information such as users' social relations~\cite{mahao2008sorec,jamali2010socialmf,mahao2011social_regularization,fanwenqing2019gnn_social} and the features of the item~\cite{weiyinwei2019mmgcn,zhaozhe2019rec_video,chengzhiyong2019mmalfm}, to further improve the recommendation performance.
However, on the other side, the excessive collection of data by recommendation systems further exacerbates the privacy problem.

Many existing studies have proposed solutions to tackle the privacy problem in recommendation.
One line of work applies cryptographic methods to matrix factorization or other relatively simple recommendation models and is usually based on a three-party setting, i.e., the users, the recommender (centralized platform in this case), 
and the crypto-service provider~\cite{nikoleanko2013ppmf,kim2018ppmf,fuanmimng2022ppmf,erkin2012privacy,badsha2017ppmf,chenjinrong2022secrec}.
These methods fully protect users' ratings and can keep the trained embeddings secret as long as the recommender and the crypto-service provider do not collude with each other.
Differential privacy~\cite{dwork2014dp}, which protects data privacy via random noises, is also applied to the privacy-preserving recommendation~\cite{berlioz2015dp_mf,zhuxue2016dp_rec,shenyilin2016epic_rec,megxuying2019privacy_social}, albeit it has to sacrifice more accuracy to maintain stronger privacy.
While the above privacy preservation methods are conducted in a centralized way, there are also some solutions for decentralized recommendation, i.e., making recommendations without a centralized platform.
These methods usually exchange the intermediate results~\cite{defiebre2020de_rec,defiebre2022de_rec,ccc2018decentralized_mf}  which potentially leak private data.
As for cryptography-based decentralized recommendation, existing work suffers from efficiency problems and only supports limited kinds of models~\cite{hoens2010private_social,chaidi2021federated_mf,tangqiang2018pp_friend_rec}.
Overall, the research of privacy-preserving decentralized recommendation is still in a relatively early stage.
Existing methods are either too computationally expensive or limited to simple models.

\begin{figure*}[h]
    \centering
    \begin{subfigure}{0.55\textwidth}
        \includegraphics[width=\textwidth]{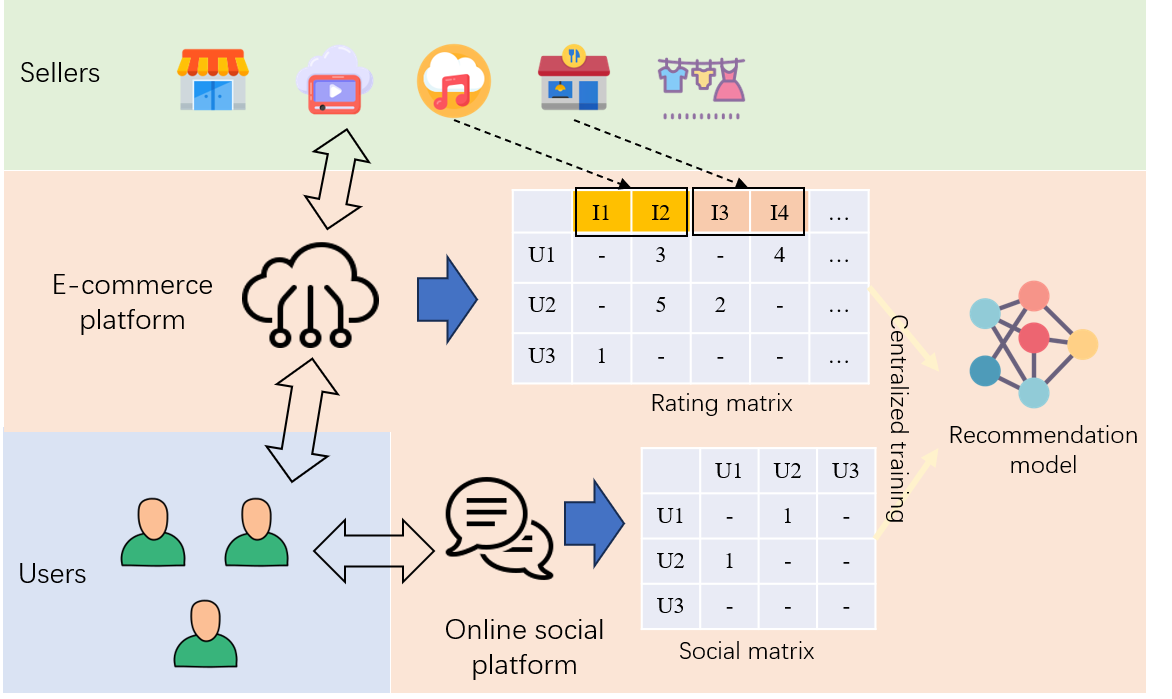}
        \caption{Centralized recommendation.}
    \end{subfigure}
    \begin{subfigure}{0.38\textwidth}
        \includegraphics[width=\textwidth]{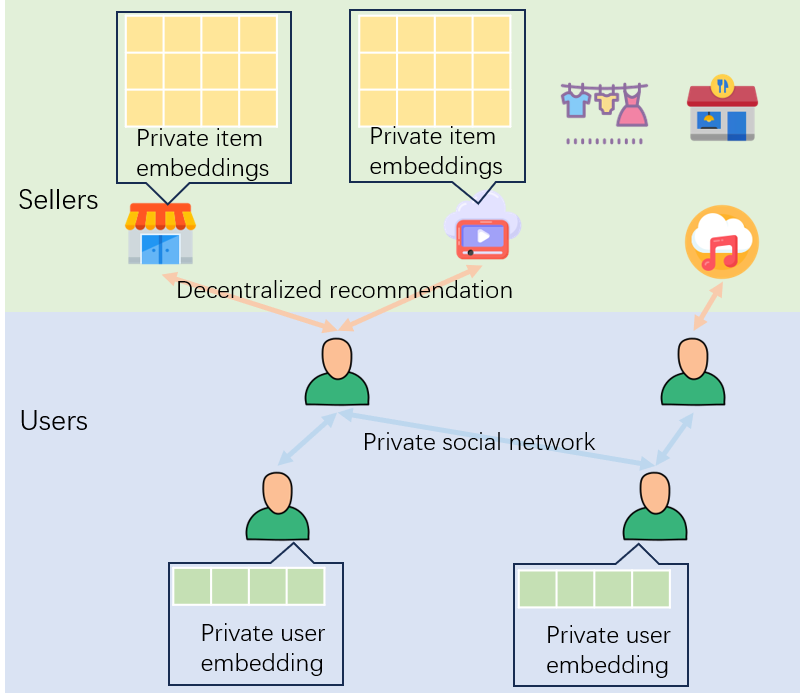}
        \caption{Decentralized recommendation.}
    \end{subfigure}
    \caption{Illustration of centralized and decentralized recommendation systems.}
    \label{fig:overview}
\end{figure*}

To achieve privacy-preserving and decentralization at the same time, we first design a practical decentralized recommendation scheme. 
In this scheme, we view all users and sellers (e.g., shops, restaurants, or online stores) as different nodes connected in the decentralized network.
The users keep their user profiles private, while the sellers keep their item profiles private.
The direct interaction between the user and the item (e.g., purchases) is known to both the user and the corresponding seller, while the user could keep the detailed interaction data and also rate the interaction privately.
The social relation data, e.g., users' friendships, are kept private by the user.
We provide a comparison of centralized recommendation and decentralized recommendation in \Cref{fig:overview}.
To make recommendations, we leverage both user-item interactions and users' social relations. 
We adopt a classical social recommendation model based on matrix factorization with social regularization~\cite{mahao2011social_regularization}.

To fully protect privacy, we use Additive Homomorphic Encryption (AHE), specifically, the Paillier cryptosystem, to train our model.
In the social regularization model, the embedding updates of the user and the item are polynomials of the original embeddings, the ratings, and the embeddings of the user's socially connected friends (neighbors).
Thus, it can be considered a secure two-party polynomial evaluation problem between the user and the seller (although the neighbors also participate in the computation, it is easy to incorporate them into the two-party computation seamlessly).
We notice that any polynomial can be decomposed into the bipartite form $\sum f_ig_i + f' + g'$, where $f_i, f'$ are terms that can be computed on one party and $g_i, g'$ can be computed on the other party. We call this a \textit{bipartite decomposition} of the polynomial.
Since AHE supports ciphertext-plaintext multiplication, the decomposed polynomial can be evaluated securely.

While it seems trivial to use AHE to securely train the social recommendation model via bipartite decomposition, we find that two important improvements can be made.
First, the bipartite decomposition may result in very bad efficiency as demonstrated in \Cref{exp:bipartite}.
This is because different arithmetic circuits representing the same polynomial may result in different computation efficiency.
To overcome this, we design secure addition and multiplication protocols based on AHE and random masking to support the computation of arbitrary circuits.
As a result, we can compute the model updates in a more `natural' order which greatly improves the efficiency compared with the naive bipartite computation.
Second, we notice that the data packing for AHE is feasible even when the computation is complicated.
Existing literature has not paid much attention to the data packing of AHE, as data packing is mostly used as a small trick to improve efficiency.
To unleash the full potential of data packing, in this paper, we study the data packing for AHE thoroughly while taking fixed-point arithmetic into consideration.
Moreover, we derived the optimal packing scheme by which we can pack as many plaintexts into one as possible while ensuring the circuit is evaluated correctly.
Combining these two improvements together, we propose a highly efficient secure computation protocol based on AHE for real-valued data.

In conclusion, we mainly make the following contributions:
\begin{itemize}
    \item We propose PADER, a decentralized social recommendation scheme based on Paillier AHE, which protects both the users' data and the recommenders' data, while making use of the user's private ratings and social relations.
    %
    \item We demonstrate secure two-party computation based on the bipartite decomposition and AHE may result in low efficiency, and propose a secure computation protocol that supports the computation of any arithmetic circuits with limited multiplicative depth.
    \item We study the data packing method for AHE thoroughly, and propose the optimal packing method for real-valued polynomial computation.
    \item We perform extensive experiments on synthetic and real-world datasets, showing that our proposed method is more efficient than the naive bipartite computation and fully homomorphic encryption in the decentralized social recommendation task.
\end{itemize}

\begin{table*}[ht]
\footnotesize
    \centering
    \begin{tabular}{cl}
    \toprule
    Notation        & Definition \\ \midrule
    $U, V$          & User and item embedding matrices. Each row ($U_i/V_j$) represents the embedding vector of one user/item. \\
    $R, r_{i,j}$             & Rating matrix and its entry in position $(i,j)$, represents the rating of $i$-th user on $j$-th item. \\
    $\mathcal R$    & Set of rated (user, item) pairs.\\
    $\mathcal N_i$  & Set of neighbors of the $i$-th user.\\
    $b_U, b_V$      & User bias and item bias.\\
    $L$             & Loss function. \\ 
    $\lambda_U, \lambda_V$  & Regularization coefficients on the user and item embeddings. \\
    $\lambda_S$             & Regularization coefficient for the social term in the SoReg model. \\
    \midrule
    $\text{Enc}, \text{Dec}$    & Encryption and decryption functions.\\
    $\mathbb Z_N^*$    & Set of integers less than $N$ and coprime to $N$.\\
    $\oplus$        & Homomorphic addition operator for ciphertext-ciphertext addition.\\
    $\otimes$       & Homomorphic multiplication operator for ciphertext-plaintext multiplication.\\
    $\mathsf{Encode}, \mathsf{Decode}$ & Encoding/decoding function between real numbers and integers. \\
    $N$             & Modulus for the encrypted computation, i.e., all computations are performed on $\mathbb Z_N$. \\
    $S$             & Scale factor for encoding/decoding.\\
    $\cdot^\text{(float)}, \cdot^\text{(fixed)}$ & Float and integer representations of the value. \\
    $G$             & Gates of a polynomial circuit, can be input gate, $\mathsf{Add}$, or $\mathsf{Mul}$. \\
    $W$             & Wires of each gate in a polynomial circuit. \\
    $\mathcal A$    & The party with the secret key in the secure computation, i.e., seller in our recommendation model.\\
    $\mathcal B$    & The party without the secret key in secure computation, i.e., user in our recommendation model.\\
    $v, c$          & Plaintext value and ciphertext value. \\
    $r$             & Random integer drawn from a uniform distribution from a certain integer ring.\\
    $\mathsf{Pack}, \mathsf{Unpack}$ & Packing and unpacking functions for data packing. \\
    $P$             & Slot size in data packing, i.e., the largest number - 1. \\
    $Q$             & Modulus used in data packing for each slot.\\
    \midrule
    $u_q$           & The $q$-th element in a user embedding vector. \\
    $u_p$           & The element contained at the new index $p$ in the user embedding vector after the vector have been packed. \\
    $v_{i,q}$       & The $q$-th element in $i$-th item's embedding vector. \\
    $v_{i,p}$       & The elements contained at the new index $p$ in $i$-th item's embedding vector after the vector have been packed. \\
    $w_i$           & The weight of $j$-th item of a certain user. E.g., rated = 1, unrated = 0.\\
    $f_{j,p}$       & The $p$-th element in the $j$-th neighbor's embedding. \\
    $n$             & Number of rated items of a certain user and a certain seller. \\
    $k$             & Dimension of embeddings. \\
    $m$             & Number of selected neighbors of a certain user in a training round. \\
    \bottomrule
    \end{tabular}
    \caption{Notations used in this paper.
    }
    \label{tab:notations}
\end{table*}
\section{Related Work}
\subsection{Decentralized and Privacy-Preserving Recommendation}
Decentralized recommendation distributes tasks such as training and inference across multiple clients (e.g., user devices) instead of relying on a centralized server. This approach helps protect data privacy by eliminating the need for users and sellers to upload their data to a central location.
A straightforward solution involves clients training the entire recommendation model using federated learning.
Current research mostly focuses on the Matrix Factorization (MF)~\cite{koren2009mf} model regarding the computation power of user devices.
Hegeds et al. \cite{fed_gossip_2019} compared federated learning and gossip learning (where there is no master node) on the decentralized MF model.
Ling et al. \cite{lingqing2012dec_mc} adopt a faster optimization method for decentralized MF.
Duriakova et al. \cite{duriakova2019pdmf} conduct item embedding aggregation from users' neighbors.
Chen et al. \cite{ccc2018decentralized_mf} propose to protect privacy by dividing the item embeddings into the global part and local part.
Chai et al. \cite{chaidi2021federated_mf} use a server to aggregate the MF model securely via homomorphic encryption, at the price of heavy computation cost.
Recommendation approaches other than MF are also explored.
Defiebre et al. \cite{defiebre2020de_rec,defiebre2022de_rec} make recommendations by querying similar users.
Long et al. \cite{long2023decentralized} aggregate device-side models by geologically and semantically similar users.
Hoens et al. \cite{hoens2010private_social} use Paillier cryptosystem to compute the average ratings of user's neighbors to make recommendation.

There are also many privacy-preserving recommendation methods that are not based on the decentralized setting.
Instead, the users upload their ratings to the server in a secure manner.
Nevertheless, they are still quite related to our approach.
Nikolaenko et al. \cite{nikoleanko2013ppmf} designed a garbled circuit for matrix factorization.
Kim et al. \cite{kim2018ppmf,fuanmimng2022ppmf} conduct secure matrix factorization by two-party computation based on homomorphic encryption.
Duan et al. \cite{duanjia2021secure_mf} propose an encryption scheme based on random permutation for outsourced MF.
Erkin et al. \cite{erkin2012he_rec,erkin2012privacy} perform similarity-based recommendation by homomorphic encryption.
To summarize, although there are multiple secure recommendation methods, most of them still focus on the centralized case.
For decentralized recommendation, studies are either not achieving cryptographic security or limited to relatively simple methods like similarity-based recommendation.

\subsection{Secure Computation}
To date, there are many methods to perform secure computation with two or multiple parties.
\textit{Homomorphic Encryption} (HE) provides a solution by allowing arithmetic operations to be performed on the ciphertext.
HE schemes are usually classified into two classes according to supported operations.
\textit{Additive Homomorphic Encryption} (AHE)~\cite{paillier1999} allows addition on the ciphertexts, and consequently, multiplication between ciphertext and plaintext is supported.
\textit{Fully Homomorphic Encryption} (FHE)~\cite{gentry2009,gentry2013,bgv2014,ckks2017}, invented recently, supports both addition and multiplication on the integer or polynomial ring.
Garbled Circuits (GC)~\cite{yao1986gc}, on the other side, can evaluate any boolean circuits and are mainly used for non-linear functions.
While HE and GC are specific techniques for secure computation, general MultiParty Computation (MPC) protocols~\cite{aby2015,secureml2017,ezpc2020,spdz2020} usually mix multiple techniques to support more computations. 
They usually use secret sharing~\cite{shamir1979share} or HE for addition and multiplication, while using GCs or other customized protocols for non-linear functions.
As for secure polynomial evaluation, current methods are mostly developed under a multi-server setting. 
They decompose the polynomials into the sum of multiplicative terms and then compute them separately~\cite{hoss2022polynomial,gajera2019polynomial}.

Existing work on secure computation mostly focuses on a two-party setting, and the computation is divided into the offline phase (e.g., generation of Beaver triples) and the online phase (e.g., secret-shared computation).
In our paper, we focus on the decentralized setting instead.
We do not consider the offline phase here since the participants are not fixed.
\section{Preliminaries}
\subsection{Social Regularization Model}
\label{sec:mf-def}
Matrix factorization~\cite{2007pmf,koren2009mf} is a classic method for recommendation.
Suppose we have a matrix $R \in \mathbb R^{N_U \times N_I}$, where each entry $r_{i, j}$ is the rating or interaction of the $i$-th user to the $j$-th item, and $N_U, N_I$ denote the number of users and items.
Matrix factorization is to find two low rank matrices $U\in \mathbb R^{N_U\times k}$ and $V \in \mathbb R^{N_I\times k}$ such that $k \ll N_U, N_I$, and $UV^T \approx R$.
Here, $U$ and $V$ are called user embeddings and item embeddings respectively.
While there are many variants of matrix factorization, we consider the most widely-used one whose loss is usually expressed as follows:
\begin{equation}
\small
\label{eq:mf-loss}
L = \sum_{i, j \in \mathcal R} (U_iV_j^T + {b_U}_i + {b_I}_j - r_{i, j})^2 + \lambda_U \Vert U_i \Vert_2^2 + \lambda_V \Vert V_j\Vert_2^2.
\end{equation}
Where $\mathcal R$ denotes the set of rated (user, item) pairs. $b_U$ and $b_I$ are the bias vectors of the users and items.
For example, if the $j$-th item is highly rated among most people, ${b_I}_j$ should be large.
$\lambda_U$ and $\lambda_V$ are regularization coefficients used to control overfitting.

As the above model only considers the interaction between users and items, social recommendation~\cite{mahao2008sorec,jamali2010socialmf,mahao2011social_regularization} also leverages the social relationships between users.
The intuition behind the social recommendation is that if two people are friends, they probably have similar preferences.
Consequently, their embedding vectors tend to be similar.
To capture this phenomenon, in this paper, we consider the SoReg (Social Regularization)~\cite{mahao2011social_regularization} model which adds a social term to the original matrix factorization loss \eqref{eq:mf-loss}:
\begin{equation}
\label{eq:smf-loss}
\small
\begin{split}
    L = & \sum_{i, j \in \mathcal R} (U_iV_j^T + {b_U}_i + {b_I}_j - r_{i, j})^2 + \lambda_U \Vert U_i \Vert_2^2 
    \\& + \lambda_V \Vert V_j\Vert_2^2 + \underbrace{\dfrac{\lambda_S}{|\mathcal N_i|}\sum_{k\in\mathcal N_i} \Vert U_i - U_k\Vert_2^2}_\text{social term},
\end{split}
\end{equation}
where $\mathcal N_i$ denotes the set of friends of user $i$, and $\lambda_S$ is to control the importance of the social term during training.

\subsection{Homomorphic Encryption}
Homomorphic Encryption (HE)~\cite{paillier1999,gentry2009} allows certain computations on the ciphertexts so that we can analyze data while protecting data privacy.
It can be categorized as either fully HE or partial HE.
Fully HE allows both multiplication and addition on the ciphertext, while partial HE only allows one type of operation.
In this paper, we utilize the partial HE that supports addition, which is also referred to as Additive Homomorphic Encryption (AHE).
AHE satisfies
\begin{equation}
    \text{Enc}(x + y) = \text{Enc}(x) \oplus \text{Enc}(y),
\end{equation}
where $x, y$ are plaintexts, $\text{Enc}$ is the encryption function, and $\oplus$ denotes the homomorphic addition on the ciphertext.
It is worth noting that the operation on the ciphertext also requires the public key.
One important property of Additive HE is that it also supports the multiplication between plaintext and ciphertext.
This is because 
\begin{equation}
\small
    \text{Enc}(\underbrace{x + \cdots + x}_\text{$y$ $x$'s}) = \underbrace{\text{Enc}(x) \oplus \cdots \oplus \text{Enc}(x)}_\text{$y$ times} = \text{Enc}(x)^{\oplus y}.
\end{equation}
For convenience, we use $\text{Enc}(x)\otimes y$ to denote $\text{Enc}(x)^{\oplus y}$ (homomorphic addition on $\text{Enc}(x)$ for $y$ times), since it corresponds to multiplication on the plaintext.

In this paper, we use the classical Paillier cryptosystem~\cite{paillier1999} for AHE.
Paillier cryptosystem takes an integer $ 0 \le x < N$ and a random factor $0 < r < N$ coprime to $n$ (denoted as $r\stackrel{\$}\leftarrow \mathbb Z_N^*$) as input, while $N = pq$ is the product of two large prime numbers.
Thus, the encryption function is $\text{Enc}(x;r)$, while we omit the random factor $r$ in most parts of the paper for convenience.
Because of the random factor, the Paillier cryptosystem is not deterministic, meaning that one plaintext can have multiple ciphertexts.
Moreover, the Paillier cryptosystem is IND-CPA secure, i.e., given the plaintext $v$ and two ciphertexts $c, c'$ with one of them being the encryption of $v$, any probabilistic polynomial-time adversary cannot guess the correct encryption with a probability of larger than $1/2 + \mu(N)$ where $\mu(N)$ is a negligible function of the security parameter.
We omit the details of the implementation of the Paillier cryptosystem as it is widely described in the existing literature.
\section{Secure Computation based on AHE}
\label{sec:two-party-computation}
In this section, we consider the secure two-party polynomial computation based on Paillier AHE.
We describe the problem as follows:
Suppose two parties $\mathcal A$ and $\mathcal B$ want to collaboratively compute the multivariate polynomial 
\begin{equation}
f(a_1, \cdots, a_m, b_1, \cdots, b_n) = f(\mathbf a, \mathbf b),
\end{equation}
where $\mathbf a = (a_1, \cdots, a_m) \in \mathbb R^m$ are $\mathcal A$'s private inputs, and $\mathbf b = (b_1, \cdots, b_n)\in\mathbb R^n$ are $\mathcal B$'s private inputs.
After computation, only the designated party (can be either $\mathcal A$, $\mathcal B$, or both) gets the result, while all other intermediate values and private inputs are kept confidential during the whole computation process.

To solve this secure computation problem using AHE, we first discuss the conversion from real numbers to integers, as a prior step for encrypted computation.
Then we propose a simple bipartite decomposition method for two-party computation and show that the efficiency can be improved by changing the computation circuit.
Based on this observation, we propose a two-party computation protocol that can compute arbitrary arithmetic circuits.
Moreover, we design an optimal data packing scheme to further improve efficiency.

\subsection{Fixed-Point Arithmetics}
\label{sec:fixed-point}
Paillier's AHE, like many other cryptosystems, only takes elements in an integer ring $0 \le x < N$ as input.
On the contrary, most computation is performed on real numbers in real-world applications.
To convert real values (here we use the `float' superscript to denote real numbers, as they are usually represented in the float format) to integer values, we have to use the fixed-point representation, which is to scale the real number and only keep the integer part, i.e. $x^\text{(fixed)} = \lfloor x^\text{(float)}\cdot S \rceil$, where the \textit{scale factor} $S$ is a large integer.
As for negative numbers, a conventional way is to use $N - x^\text{(fixed)}$ to represent $-x^\text{(float)}$.
Hence, we can write the conversion from real number to integer as follows:
\begin{equation}
\small
\begin{split}
\label{eq:encode}
    \mathsf{Encode}(x^{\text{(float)}}; N, S) = 
    \begin{cases}
        \left\lfloor x^{\text{(float)}} \cdot S \right\rceil & \text{if } x^{\text{(float)}} < \frac{N}{2S}, \\
        N - \left\lfloor x^{\text{(float)}} \cdot S \right\rceil & \text{otherwise}.
    \end{cases}
\end{split}
\end{equation}

When converting a fixed-point number to its original float-point value, we first decide its sign by checking whether $x^\text{(fixed)} < N/2$ or not, then scale them using $1/S$:
\begin{equation}
\small
\begin{split}
\label{eq:decode}
    \mathsf{Decode}(x^{\text{(fixed)}}; N, S) = 
    \begin{cases}
        x^{\text{(fixed)}} / S & \text{if } x^{\text{(fixed)}} < \frac{N}{2}, \\
        (N - x^{\text{(fixed)}}) / S & \text{otherwise}.
    \end{cases}
\end{split}
\end{equation}

By the above conversions, we can see that fixed-point computation supports addition and multiplication modulo $N/S$. 
For two arbitrary real numbers $x, y$ such that $|x|, |y| \ll N/S$ and ignoring the small rounding error, we have:
\begin{equation}
\small
    \begin{cases}
    \mathsf{Decode}(\mathsf{Encode}(x) + \mathsf{Encode}(y)) = x + y, \\
    \mathsf{Decode}(\mathsf{Encode}(x) \cdot \mathsf{Encode}(y)) / S = x \cdot y,
    \end{cases}
\end{equation}
as there will be no overflow/underflow during the computation.

\subsection{Bipartite Decomposition}
\begin{definition}
    A bipartite decomposition for polynomial $f(\mathbf a, \mathbf b)$ is to decompose $f$ in the form
    \begin{equation}
        f(\mathbf a, \mathbf b) = \sum_{i=1}^k h_{\mathcal A, i}(\mathbf a)h_{\mathcal B, i}(\mathbf b) + g_{\mathcal A}(\mathbf a) + g_{\mathcal B}(\mathbf b),
    \end{equation}
    where $k$ represents the number of product terms in the decomposition. $\{h_{\mathcal A,i}(\mathbf a)\}, g_{\mathcal A}(\mathbf a)$ are polynomials taking only $\mathcal A$'s inputs, while $\{h_{\mathcal B,i}(\mathbf b)\}, g_{\mathcal B}(\mathbf b)$ are polynomials taking only $\mathcal B$'s inputs.
\end{definition}
Obviously, any multivariate polynomial taking two parties' inputs can be decomposed into such bipartite form since it can be written as the sum of monomials (product of variables).
Suppose that $\mathcal A$ has the secret key for AHE.
Leveraging the AHE's ability on ciphertext-plaintext multiplication, we can securely evaluate the polynomial in its bipartite form as follows:

\begin{equation}
\small
\begin{split}
    \text{Enc}[f(\mathbf a, \mathbf b)] = \oplus_{i=1}^k \left(\text{Enc}[h_{\mathcal A, i}(\mathbf a)]\otimes h_{\mathcal B, i}(\mathbf b)\right) \\ 
    \oplus \text{Enc}[g_{\mathcal A}(\mathbf a)] \oplus \text{Enc}[g_{\mathcal B}(\mathbf b)].
\end{split}
\end{equation}
We formalize the algorithm for secure computation by bipartite decomposition in \Cref{alg:bipartite}.

\begin{algorithm}[ht]
\begin{algorithmic}[1]
\footnotesize
\caption{Bipartite secure computation}
\label{alg:bipartite}
\Procedure{SecurebipartiteCompute}{$f(\mathbf a, \mathbf b) = \sum h_{\mathcal A, i}(\mathbf a)h_{\mathcal B, i}(\mathbf b) + g_{\mathcal A}(\mathbf a) + g_{\mathcal B}(\mathbf b)$}
    \State $\mathcal{A}$ computes all its local terms $\{ h_{\mathcal A, i}(\mathbf a) \}, g_{\mathcal A}(\mathbf a)$, then convert them into fixed-point representations via \cref{eq:encode}: $H_{\mathcal A, i} = [h_{\mathcal A, i}(\mathbf a)]^\text{(fixed)}, G_{\mathcal A} = [g_{\mathcal A}(\mathbf a)]^\text{(fixed)}$.
    \State $\mathcal A$ computes $\{ \text{Enc}(H_{\mathcal A, i}) \}, \text{Enc}(G_{\mathcal A}\cdot S)$ and sends them to $\mathcal B$.
    \State $\mathcal B$ computes $c = \oplus_{i=1}^k \left(\text{Enc}(H_{\mathcal A, i}) \otimes H_{\mathcal B, i}\right) \oplus \text{Enc}(G_{\mathcal A} \cdot S) \oplus \text{Enc}(G_{\mathcal B} \cdot S)$.
    

    \If{Only A is supposed to know the result}
        \State $\mathcal B$ sends $c$ to $\mathcal A$;
        \State $\mathcal A$ obtains the final result $v = \mathsf{Decode}(\text{Dec}(c)/S)$;
    \ElsIf{Only B is supposed to know the result}
        \State $\mathcal B$ adds a random number $r \stackrel{\$}{\leftarrow} \mathbb Z_N$ to the encrypted result, and sends $c \oplus \text{Enc}(r)$ to $\mathcal A$;
        \State $\mathcal A$ decrypts it and sends $\text{Dec}(c) + r$ to $\mathcal B$;
        \State $\mathcal B$ gets the final result $v = \mathsf{Decode}(\text{Dec}(c)/S)$;
    \ElsIf{Both A and B are supposed to know the result}
        \State $\mathcal B$ sends $c$ to $\mathcal A$;
        \State $\mathcal A$ obtains the final result $v = \mathsf{Decode}(\text{Dec}(c)/S)$;
        \State $\mathcal A$ shares $v$ with $\mathcal B$;
    \EndIf
    
\EndProcedure
\end{algorithmic}
\end{algorithm}

\subsubsection{Inefficiency of bipartite Computation}
Although bipartite decomposition with AHE provides a simple way to securely evaluate polynomials between two parties, it can be less efficient on certain problems.
Now we give an illustrative example to show its inefficiency.
\begin{example}
\label{exp:bipartite}
Consider the polynomial 
\begin{equation}
f(\mathbf a, \mathbf b, \mathbf x, \mathbf y) = 
\sum_{i=1}^n{a_ix_i}\sum_{i=1}^n{b_iy_i} = \sum_{i, j = 1}^n a_ib_jx_iy_j,
\end{equation}
where $\mathbf a, \mathbf b$ are owned by $\mathcal A$, while $\mathbf x, \mathbf y$ are owned by $\mathcal B$.
The bipartite decomposition of $f$ has $n^2$ terms, i.e., for each pair of $i, j$, the two parties compute the product $(a_ib_j)(x_iy_j)$, then sum them up.
This requires $n^2$ encryptions and $\otimes$'s.
On the other side, one can also first compute $A = \sum a_ix_i$ and $B = \sum b_iy_i$, which only requires (roughly) $2n$ times of encryptions and $\otimes$'s.
However, this brings new difficulty of computing the product of $A$ and $B$ securely, which we will discuss later.
\end{example}
\begin{remark}
One may wonder whether $f$ has another bipartite decomposition with less than $n^2$ terms.
Unfortunately, the answer is no.
Consider the 4-tensor $T$ representing the coefficient of the term $a_ib_jx_ky_l$.
We can see that 
\begin{equation}
T_{i, j, k, l} = \begin{cases}
    1 & \text{if $i=k$ and $j=l$} \\
    0 & \text{otherwise}
\end{cases}
\end{equation}
We then convert $T$ into a $n^2\times n^2$ matrix by flattening its dimensions 1,2 and 3,4 respectively.
Interestingly, it becomes the identity matrix $I_{n^2} \in \mathbb R^{n^2\times n^2}$.
Then any bipartite decomposition can be viewed as a decomposition of $I_{n^2}$, such that $\sum_{r=1}^R p_r q_r^T = I_n^2$, where $p_r, q_r \in \mathbb R^{n^2}$ whose $(i\cdot n + j)$-th element represents the coefficients of $a_ib_j$ or $x_iy_j$.
While the rank of the identity matrix is the same as its dimension, any bipartite decomposition has at least $n^2$ terms.

The above example demonstrates that, although bipartite decomposition enables secure two-party computation of any polynomials, there could be more efficient ways to compute the same polynomial.
Next, we will design a protocol that supports the polynomial computation of any order.
\end{remark}

\subsection{Arithmetic Circuit}
Here we formally define the circuit representation for polynomials.

\begin{definition}[Arithmetic circuit of a polynomial]
For a polynomial $f(x_1, \cdots, x_m)$, an arithmetic circuit for computing it can be represented as $(G, W)$.
$G = (g_1, g_2, \cdots, g_{|G|})$ is the list of gates (e.g., $\mathsf{Add}$, $\mathsf{Mul}$, or input gate).
$W = [(w_{1,1}, w_{1,2}, \cdots), \cdots, (w_{|W|,1}, w_{|W|, 2}, \cdots)]$ are the indices of gates' input wires, where $w_{j, 1}, w_{j, 2}, \cdots$ is the indices of the input wires of $G[j]$.
For example, $G[j]$'s inputs are $G[w_{j,1}]$ and $G[w_{j, 2}]$'s outputs (if it has two input wires).
Notice that, a gate can contain an arbitrary number of input wires.
For example. if $G[j]$ is an input variable, then $G[j] = ()$ is an empty tuple since no input wires are needed, and $\mathsf{Add}, \mathsf{Mul}$ gates have two input wires.
To evaluate the polynomial in order, $G$ must satisfy that for any $j \in [1, |W|]$, $w_{j, 1}, w_{j, 2} < j$, i.e., before computing the output of one gate, its inputs must be computed first.
\end{definition}
\begin{example}
Consider the polynomial $f(x_1, x_2, x_3) = x_1(x_2 + x_3)$, we can write it as
\begin{equation}
        f(x_1, x_2, x_3) = \mathsf{Mul}(x_1, \mathsf{Add}(x_2, x_3)).
    \end{equation}
\end{example}
Hence, one arithmetic circuit of $f$ is 
\begin{equation}
\begin{cases}
    G = (x_1, x_2, x_3, \mathsf{Add}, \mathsf{Mul}), \\
    W = [(), (), (), (2, 3), (1, 4)].
\end{cases}
\end{equation}

\subsubsection{Local Computation}
Since we are only interested in the secure computation part, we hope to pre-compute any variables that can be locally computed on one party.
When a gate's inputs all belong to the same party, we can compute them locally without involving any interactive computation between two parties.
By pre-computing those local results, we can transform a circuit into a more \textit{compact} form.
For example, consider $f(a, b, x, y) = ab + xy$, where $a, b$ belong to $\mathcal A$ and $x, y$ belong to $B$.
The local computations can be represented as $c = ab$ (computed on $\mathcal A$) and $z = xy$ (computed on $\mathcal B$),
yielding the compact circuit $f'(c, z) = c + z$.
We describe this algorithm formally in \Cref{alg:local-compute}.

\begin{algorithm}[ht]
\begin{algorithmic}[1]
\small
\caption{Local computation to obtain a compact circuit}
\label{alg:local-compute}
\Procedure{LocalCompute}{$G, W$}
\State LocalVars$_\mathcal A$ $\gets [\ ]$
\State LocalVars$_\mathcal B$ $\gets [\ ]$
\For{$i = 1$}{$|G|$}
    \If{$G[i]$ is a input of $\mathcal C \in \{\mathcal A, \mathcal B\}$} 
        \State Add $i$ to LocalVars$_{\mathcal C}$
    \ElsIf{$\forall j \in W[i], j$ in LocalVars$_{\mathcal C }$, $\mathcal C \in \{\mathcal A, \mathcal B\}$}
        \State Add $i$ to LocalVars$_{\mathcal C}$
        \State Replace $G[i]$ with an expression that can be computed on party $\mathcal C$ locally.
    \EndIf
\EndFor
\For {$i = 1$}{$|G|$}
    \State ParentNodes $\gets$ find all $j$ such that $i \in W[j]$
    \If{ParentNodes $\subseteq$ LocalVars$_{\mathcal A} \cup$ LocalVars$_{\mathcal B}$}
        \State Delete the $i$-th element of $G$ and $W$
        \State $\forall i \in \{1, 2, \cdots, |W|\}$ replace $j$ with $j - 1$ for $j \in W[i]$ and $j \ge i$.
    \EndIf
\EndFor
\State \Return $(G, W)$
\EndProcedure
\end{algorithmic}
\end{algorithm}

\subsubsection{Level of the Variable}
The arithmetic circuit representation of a polynomial is agnostic to the domain of the polynomial.
However, in our case, we want to convert a real-valued polynomial into an integer one.
Due to the scale factor $S$ multiplied for each input, each multiplication amplifies the result by $S$.
Therefore, we define the \textit{level} of the variable as the number of $S$ multiplied.
Original inputs have a level of $1$ since they are only multiplied by $S$ once.
When adding two variables, the variable of the lower level shall be multiplied by the powers of $S$ to reach the higher level.
To better understand this, we consider a simple polynomial $f(x) = ab + c$.
If we just convert each input variable to its fixed-point representation, we will have $f'(x) = abS^2 + cS$, which is meaningless.
On the other side, by `aligning' their levels during addition, we have $f'(x) = abS^2 + cS^2 = f(x)S^2$, thus we can divide the outcome by $S^2$ to retain the actual result.
Hence, before performing the two-party secure computation in the integer domain, we must compute the levels of all gate outputs.
We formally describe the level computation algorithm in \Cref{alg:level-compute}.
\begin{algorithm}[ht]
\begin{algorithmic}[1]
\small
\caption{Compute levels of variable}
\label{alg:level-compute}
\Procedure{ComputeLevels}{$G, W$}
\State Levels $\gets [\ ]$
\For{$i = 1$}{$|G|$}
    \If{$G[i]$ is a variable or expression}
        \State Levels$[i]\gets 1$
    \ElsIf{$G[i] \text{ is } \mathsf{Add}$}
        \State Levels$[i] \gets \max_{j\in W[i]}\{\text{Levels}[j]\}$
    \ElsIf{$G[i] \text{ is } \mathsf{Mul}$}
        \State Levels$[i] \gets \sum_{j\in W[i]} \text{Levels}[j]$
    \EndIf
\EndFor
\State \Return Levels
\EndProcedure
\end{algorithmic}
\end{algorithm}

\subsubsection{Two-party Secure Addition and Multiplication}
By now, we have performed the local computation and obtained a compact polynomial, as all local computation is finished.
We have also computed the level of each intermediate variable.
Based on these, we can perform two-party secure addition and multiplication to finally produce the result of the polynomial.
Our method is motivated by \cite{KSS2013modular_design}, which has provided a high-level description of secure function evaluation based on AHE.

To conduct secure computation, we must ensure that the actual values of intermediate variables remain secret to both parties.
Hence, we stick to the following principles:
\begin{itemize}
    \item $\mathcal A$ can only have the masked value of any intermediate variable since it has the secret key.
    Specifically, for any intermediate variable $v$, $A$ can only get $v + m \bmod N$, where $m \stackrel{\$}{\leftarrow} \mathbb Z_N$ is randomly drawn from $\mathbb Z_N$ with equal probability and is not known to $\mathcal A$.
    \item $\mathcal B$ can only have the ciphertext of the intermediate variables, i.e., for any intermediate value $v$, $\mathcal B$ can only get $\text{Enc}(v)$.
\end{itemize}

Thus, both $\mathcal A$ and $\mathcal B$ cannot have any knowledge about the intermediate variables, in information-theoretic and computational senses, respectively.
Based on this, we can design a protocol for two-party computation on arithmetic circuits, by implementing secure addition and multiplication for all kinds of variables.
There are four cases of addition and multiplication regarding the types of input variables (we omit symmetric cases).
Note two input variables as $x$ and $y$, the cases are: 
\begin{itemize}[itemsep=1pt,parsep=0pt]
    \item $x$ belongs to $\mathcal A$ and $y$ belongs to $\mathcal B$ (denoted as $\mathcal{AB}$); 
    \item $x$ is an intermediate variable and $y$ belongs to $\mathcal A$ (denoted as $\mathcal{IA}$);
    \item $x$ is an intermediate variable and $y$ belongs to $\mathcal B$ (denoted as $\mathcal{IB}$);
    \item  $x, y$ both are intermediate variables (denoted as $\mathcal{II}$).
\end{itemize}
Based on this, We design $\mathsf{SecureAdd}$ and $\mathsf{SecureMul}$ algorithms for all these four cases.
We assume that each party has a storage module to store variables in key-value pairs.
Thus, each variable is represented by an integer (key), while $\mathcal A$ or $\mathcal B$ could find the corresponding plaintext or ciphertext by taking this integer as a key.
Specifically, each operator function takes the following arguments:
$i_1, i_2, j$ are the keys of the first input, the second input, and the output;
level$_1$, level$_2$, and opLevel are the levels of the first input, the second input, and the output.

For the addition of two variables, we can simply let $\mathcal B$ perform homomorphic addition on ciphertexts and get the result as an intermediate variable.
Note that we must align the levels of the variables before addition.
Multiplication is a little more complicated.
If one operand belongs to $\mathcal B$, simply using $\otimes$ yields the encrypted product on $\mathcal B$.
If the operands are one intermediate variable and one variable of $\mathcal A$'s, or both are intermediate variables, we have to use random masks (i.e., additive blinding in \cite{KSS2013modular_design}) to assist computation.
For example, consider $\text{Enc}(v_1), \text{Enc}(v_2)$ are two intermediate values held by $\mathcal B$.
To multiply them, $\mathcal B$ first generates two random masks $r_1, r_2$, and then sends $\text{Enc}(v_1 - r_1), \text{Enc}(v_2 - r_2)$ to $\mathcal A$.
As $\mathcal A$ has the secret key, he can easily compute $\text{Enc}[(v_1 + r_1)(v_2 +r_2)]$ and sends it back to $B$.
Notice that $v_1v_2 = (v_1 + r_1)(v_2 + r_2) - v_1r_2 - v_2r_1 - r_1r_2$, 
the ciphertext of the first term is already computed by $\mathcal A$, 
while the ciphertext of the latter three terms can also be computed locally on $\mathcal B$.
So $\mathcal B$ can obtain the result $\text{Enc}(v_1v_2)$.
We formally describe the secure multiplication protocol in \Cref{alg:smul}.

\begin{algorithm}[h!]
\begin{algorithmic}[1]
\footnotesize
\caption{Secure addition of two variables}
\label{alg:sadd}
\Procedure{SecureADD$_{\mathcal{AB}}$}{$i_1, i_1, j$, level$_1$, level$_2$, opLevel}
    \State $\mathcal A$ finds $v_1$ with key $i_1$
    \State $\mathcal B$ finds $v_2$ with key $i_2$
    \State $\mathcal A$ computes $v_1 \gets v_1 \cdot S^{\text{opLevel} - \text{level}_1}$
    \State $\mathcal B$ computes $v_2 \gets v_2 \cdot S^{\text{opLevel} - \text{level}_2}$
    \State $\mathcal A$ encrypts $v_1$ and sends $\text{Enc}(v_1)$ to $B$
    \State $\mathcal B$ computes $\text{Enc}(v_1 + v_2) = \text{Enc}(v_1) \oplus \text{Enc}(v_2)$ and stores the result with key $j$
\EndProcedure

\Procedure{SecureADD$_{\mathcal{IA}}$}{$i_1, i_2, j$, level$_1$, level$_2$, opLevel}
    \State $\mathcal B$ finds $c_1$ with key $i_1$
    \State $\mathcal A$ finds $v_2$ with key $i_2$
    \State{// Notice that opLevel = level$_1$ since the intermediate variable's level is higher than the input}
    \State $\mathcal A$ computes $v_2 \gets v_2 \cdot S^{\text{opLevel} - \text{level}_2}$ 
    \State $\mathcal A$ encrypts $v_2$ and sends $\text{Enc}(v_2)$ to $B$
    \State $\mathcal B$ computes $c_1 \oplus \text{Enc}(v_2)$ and stores the result with key $j$.
\EndProcedure

\Procedure{SecureADD$_{\mathcal{IB}}$}{$i_1, i_2, j$, level$_1$, level$_2$, opLevel}
    \State $\mathcal B$ finds $c_1$ with key $i_1$
    \State $\mathcal B$ finds $v_2$ with key $i_2$
    \State $\mathcal B$ computes $v_2 \gets v_2 \cdot S^{\text{opLevel} - \text{level}_2}$
    \State $\mathcal B$ computes $c_1 \oplus \text{Enc}(v_2)$ and stores result with key $j$.
\EndProcedure

\Procedure{SecureADD$_{\mathcal{II}}$}{$i_1, i_2, j$, level$_1$, level$_2$, opLevel}
    \State $\mathcal B$ finds $c_1, c_2$ with key $i_1, i_2$    
    \If{level$_1 <$ opLevel}
        \State $\mathcal B$ computes $c_1 \gets c_1 \otimes S^\text{opLevel - level$_1$}$
    \EndIf
    \If{level$_2 <$ opLevel}
        \State $\mathcal B$ computes $c_2 \gets c_2 \otimes S^\text{opLevel - level$_2$}$
    \EndIf
    \State $\mathcal B$ computes $c_1 \oplus c_2$ and stores the result with key $j$.
\EndProcedure
\end{algorithmic}
\end{algorithm}

\begin{algorithm}[h!]
\begin{algorithmic}[1]
\footnotesize
\caption{Secure multiplication of two variables}
\label{alg:smul}
\Procedure{SecureMUL$_\mathcal{AB}$}{$i_1, i_2, j$, level$_1$, level$_2$, opLevel}
    \State $\mathcal A$ finds $v_1$ with key $i_1$
    \State $\mathcal B$ finds $v_2$ with key $i_2$
    \State $\mathcal A$ encrypts $v_1$ and sends $\text{Enc}(v_1)$ to $B$
    \State $\mathcal B$ computes $\text{Enc}(v_1) \otimes v_2$ and stores the result with key $j$
\EndProcedure

\Procedure{SecureMUL$_\mathcal{IA}$}{$i_1, i_2, j$, level$_1$, level$_2$, opLevel}
    \State $\mathcal B$ finds $c_1$ with key $i_1$
    \State $\mathcal A$ finds $v_2$ with key $i_2$
    \State $\mathcal B$ samples $r \stackrel{\$}{\leftarrow} \mathbb Z_N$ and computes $c_1' = c_1 \oplus \text{Enc}(-r)$, then sends $c_1'$ to $A$
    \State $\mathcal A$ computes $c_3' = c_1'\otimes v_2$, encrypts $v_2$, and sends $c_3'$ and $c_2 = \text{Enc}(v_2)$ to $B$
    \State $\mathcal B$ computes $c_3' \oplus (c_2 \otimes r) \oplus \text{Enc}(0)$ and stores the result with key $j$
    \State // Adding $\text{Enc}(0)$ is to re-randomize the ciphertext
\EndProcedure

\Procedure{SecureMUL$_\mathcal{IB}$}{$i_1, i_2, j$, level$_1$, level$_2$, opLevel}
    \State $\mathcal B$ finds $c_1,v_2$ with key $i_1, i_2$
    \State $\mathcal B$ computes $c_1 \otimes v_2$ and stores the result with key $j$
\EndProcedure

\Procedure{SecureMUL$_\mathcal{II}$}{$i_1, i_2, j$, level$_1$, level$_2$, opLevel}
    \State $\mathcal B$ finds $c_1, c_2$ with key $i_1, i_2$
    \State $\mathcal B$ samples $r_1, r_2 \stackrel{\$}{\leftarrow} \mathbb Z_N$
    \State $\mathcal B$ computes $c_1' \gets c_1 \oplus \text{Enc}(r_1)$ and $c_2' = c_2 \oplus \text{Enc}(r_2)$
    \State $\mathcal B$ sends $c_1', c_2'$ to $A$
    \State $\mathcal A$ decrypts $v_1' \gets \text{Dec}(c_1')$
    \State $\mathcal A$ computes $c_3' \gets c_2'\otimes v_1' \oplus \text{Enc}(0)$, and sends it to $\mathcal B$.
    \State $\mathcal B$ computes $c_3'\oplus \text{Enc}(-r_1r_2)\oplus (c_1 \otimes (-r_2)) \oplus (c_2 \otimes (-r_1))$ and stores the result with key $j$
\EndProcedure
\end{algorithmic}
\end{algorithm}

\subsubsection{Putting All Together}
With the $\mathsf{SecureAdd}$ and $\mathsf{SecureMul}$ protocols, We are now able to securely evaluate any polynomial.
Given a polynomial, we can find the corresponding circuit.
Then we perform the local computation to obtain a compact form of the circuit and compute the levels of all intermediate variables.
Finally, we securely evaluate the compact circuit gate by gate using the $\mathsf{SecureAdd}$ and $\mathsf{SecureMul}$ protocols.

\begin{algorithm}[h!]
\begin{algorithmic}[1]
\footnotesize
\caption{Secure two-party polynomial evaluation}
\label{alg:poly-compute}
\Procedure{SecurePoly}{$G, W$}
    \State $(G, W) \gets \text{LocalCompute}(G, W)$
    \State Levels$\gets \text{ComputeLevels}(G, W)$
    \For{$i = 1$}{$|G|$}
        \If{$G[i]$ is party $\mathcal C\in \{\mathcal A,\mathcal B\}$'s input variable}
            \State $\mathcal C$ stores the actual value of $\mathsf{Encode}(G[i])$ with key $i$
        \Else{ // $G[i] \in \{\mathsf{Add}, \mathsf{Mul}\}$}
            \State Perform $\mathsf{SecureAdd}$/$\mathsf{SecureMul}$ protocol according to $G[i]$ and the types of $G[j], j\in W[i]$, with the arguments $W[i, 1]$, $W[i, 2]$, $i$, $\text{Levels}[W[i, 1]], \text{Levels}[W[i, 2]], \text{Levels}[i]$.
        \EndIf
    \EndFor
    \State // Reveal the result
    \State $\mathcal B$ finds $c$ with key $|G|$

    \If{$\mathcal A$ is supposed to know the result}
        \State $\mathcal B$ sends $c$ to $\mathcal A$, who decrypts it to get $v \gets \text{Dec}(c)$, and compute $v \gets \mathsf{Decode}(v / S^{\text{Levels}[|G|] - 1})$
    \ElsIf{$\mathcal B$ is supposed to know the result}
        \State $\mathcal B$ adds a random number $r \stackrel{\$}{\leftarrow} \mathbb Z_N$ and send $c' = c \oplus \text{Enc}(r)$ to $\mathcal A$.
        \State $\mathcal A$ decrypts it to get $v' = \text{Dec}(c')$ and sends it to $\mathcal B$.
        \State $\mathcal B$ recovers $f(\mathbf a, \mathbf b)^\text{(fixed)}$ and gets the final result using \cref{eq:decode}.
    \ElsIf{Both $\mathcal A$ and $\mathcal B$ are supposed to know the result}
        \State $\mathcal B$ sends $c$ to $\mathcal A$, who decrypts it to get $v \gets \text{Dec}(c)$, and compute $v \gets \mathsf{Decode}(v / S^{\text{Levels}[|G|] - 1})$
        \State $\mathcal A$ sends $v$ to $\mathcal B$.
    \EndIf
\EndProcedure
\end{algorithmic}
\end{algorithm}

\section{Data Packing}
\begin{figure}
    \centering
    \includegraphics[width=1\linewidth]{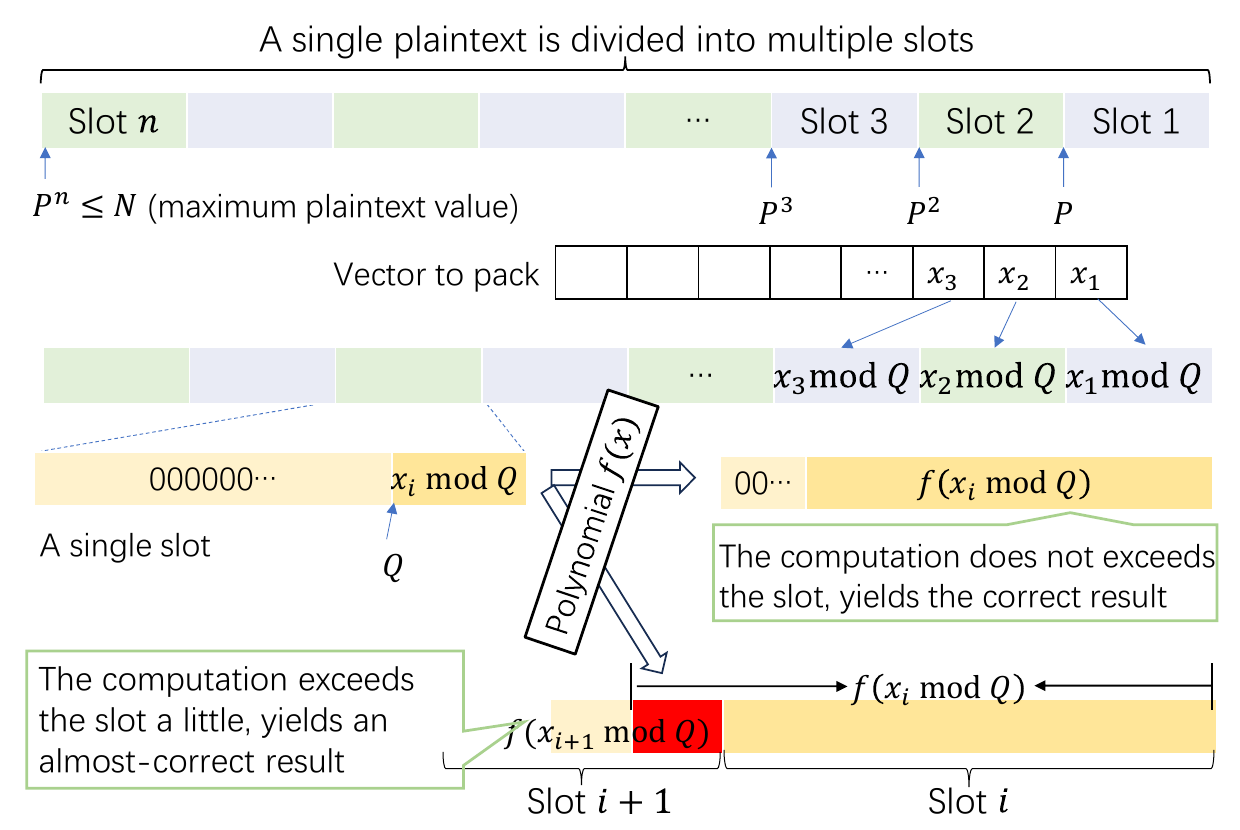}
    \caption{Illustration of our packing method.}
    \label{fig:data-packing}
\end{figure}
Data packing is used to accelerate cryptographic computations by putting multiple numbers into one big number that is still within the range of plaintext, which is widely used in homomorphic encryptions and multiparty computation~\cite{sharemind3_2012,aby2015,gazelle2018,chenhao2019multikey,erkin2012he_rec}.
In our case, we apply data packing to Paillier AHE which is operated on the integer ring.
Paillier cryptosystem usually uses a plaintext length of around 2048 bits to reach the security standard.
However, we do not need so many bits to encode just one number, as most operations in machine learning are performed on 32-bit float values.
If so, most bits in the plaintext are wasted as they will stay zero all through the computation.
Hence, it is necessary to implement data packing before cryptographic operations to improve efficiency.

The classical data packing method for Additive HE~\cite{point_inclusion_2007,bianchi2010encrypted_signal,wuweibin2023vpip} can be described as follows.
Suppose a cryptosystem accepts plaintext on the integer ring $\mathbb{Z}_N$, i.e., the plaintext is non-negative and smaller than $N$.
Then we can pack $n$ integers $x_1, x_2, ..., x_n$ to one plaintext as follows:
\begin{equation}
\label{eq:basic-pack}
    \mathsf{Pack}(x_1, ..., x_n) = x_1 + x_2\cdot P + x_3\cdot P^2 + ... + x_n \cdot P^{n-1};
\end{equation}
and the corresponding unpacking function:
\begin{equation}
\small
\label{eq:basic-unpack}
    \mathsf{Unpack}(X) = (\lfloor \dfrac{X}{P^0} \rfloor \bmod P, ..., \lfloor \dfrac{X}{P^{n-1}} \rfloor \bmod P).
\end{equation}

It is easy to verify that, as long as $0 \le x_i < P$ and $\sum_{i=1}^n (P-1)P^{i-1} = P^{n} - 1 < N$, the packed number will not exceed $N$, and unpacked numbers are exactly the original numbers.
In practice, $P$ is usually chosen as $2^L$ for efficiency, as the data format in the computer is binary.
Thus, if the bit-length of $N$ is $B+1$ ($2^{B} \le N < 2^{B + 1}$), we can pack $\lfloor B/L \rfloor$ integers into one plaintext.
For unpacking the plaintext, we just extract the $(i - 1)\cdot L$ to $i\cdot L - 1$ bits to form $x_i$.

\subsection{Optimal Data Packing}
Existing work on AHE only applies data packing to simple operations, e.g., element-wise addition and scalar-vector multiplication~\cite{point_inclusion_2007,erkin2012he_rec,shamir1979share,aby2015},
while little consideration on negative numbers, fixed-point computation, and more complex arithmetic circuits.
Hence, in this section, we design an optimal packing method for any fixed-point polynomial computation.
We display an overview of our packing method in \Cref{fig:data-packing}.

\subsubsection{Negative Values}
Without data packing, the arithmetic operations are performed on ring $\mathbb Z_N$, thus we can use $N - x$ to represent $x$ since $N$ is large~\cite{chaidi2021federated_mf,liuximeng2016pp_float,franz2011}.
However, this conversion cannot be applied to the data packing case as $\mathsf{Pack}(x_1, x_2, \cdots) \bmod N \ne \mathsf{Pack}(x_1 \bmod N, x_2 \bmod N, \cdots)$.
To mitigate this, we must select another modulus $Q < P$, and use $Q - x$ to represent $-x$.
Thus, we consider the computation to be performed on $\mathbb Z_Q$ instead of $\mathbb Z_N$.

\subsubsection{Choosing $P$ According to $Q$}
To produce the correct result using data packing, if we pack $x$ when computing an integer polynomial $f(x, \mathbf y) = f(x, y_1, \cdots, y_n)$ on $x_1$, we must make sure that
\begin{equation}
\label{eq:correct-pack}
\begin{split}
      \mathsf{Unpack}[f(\mathsf{Pack}(x_1,\cdots, x_n), \mathbf y)] 
      \\= [f(x_1, \mathbf y), \cdots, f(x_n, \mathbf y)].
\end{split}
\end{equation}
We now give the required conditions for this.
\begin{theorem}[Packed computation on integer]
\label{thm:int-cond}
Suppose that $x_1, \cdots, x_n$ are $n$ integers satisfying $P^{n} \le N$, $f(x_i, \mathbf y) < P$, and $x_i$'s degree is less than 2 in $f(x_i, \mathbf y)$, then \Cref{eq:correct-pack} holds.
\end{theorem}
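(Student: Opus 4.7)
The plan is to reduce the claim to a base-$P$ digit extraction on a single integer. Using the hypothesis that $x$ has degree at most $1$ in $f$, I would write $f$ in affine form $f(x, \mathbf y) = A(\mathbf y)\cdot x + B(\mathbf y)$, where $A, B$ are polynomials in $\mathbf y$ alone. Substituting $X = \mathsf{Pack}(x_1,\ldots,x_n) = \sum_{i=1}^n x_i P^{i-1}$ and distributing then yields, after applying the broadcast of the $x$-independent term across all slots via the identity $1 + P + \ldots + P^{n-1}$, the integer $\sum_{i=1}^n f(x_i, \mathbf y)\cdot P^{i-1}$---which is precisely the packing of the desired output vector $[f(x_1,\mathbf y),\ldots,f(x_n,\mathbf y)]$.

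Given this identity at the level of integers, I would then argue that no numerical distortion occurs when the computation is carried out in $\mathbb Z_N$ and then unpacked. The per-slot bound $f(x_i, \mathbf y) < P$ prevents carry from slot $i-1$ into slot $i$, so each base-$P$ digit extracted by \Cref{eq:basic-unpack} equals $f(x_i, \mathbf y)$ exactly. The global bound $P^n \le N$ ensures the packed integer lies in $\{0, 1, \ldots, P^n - 1\} \subseteq \mathbb Z_N$, so reduction modulo $N$ is the identity on the value of interest. Together these two conditions promote the algebraic identity above into the claimed equality $\mathsf{Unpack}[f(X, \mathbf y)] = [f(x_1, \mathbf y), \ldots, f(x_n, \mathbf y)]$.

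The main obstacle I anticipate is pinning down why the degree restriction is really needed, rather than just convenient. If some $x_i$ appeared to degree $\ge 2$ in $f$, then expanding $X^2 = \big(\sum_i x_i P^{i-1}\big)^2$ would introduce cross-slot products $x_i x_j P^{i+j-2}$, leaking contributions between slots and populating slots $2, \ldots, 2n-2$ in ways no per-slot bound $< P$ can contain. The proof therefore really rests on degree $1$ being the critical regime in which packed evaluation commutes slot-wise with scalar evaluation; the two size inequalities then merely guarantee that, once this algebraic commutation holds at the integer level, neither modular reduction by $N$ nor inter-slot carry can destroy it before $\mathsf{Unpack}$ is applied.
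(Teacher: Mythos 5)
Your proposal is correct and takes essentially the same route as the paper's proof: use linearity of $f$ in $x$ to distribute the evaluation across the base-$P$ slots, then invoke the per-slot bound $f(x_i,\mathbf y) < P$ to rule out inter-slot carries and the bound $P^n \le N$ to rule out reduction modulo $N$, so the digit extraction of \Cref{eq:basic-unpack} recovers each $f(x_i,\mathbf y)$. Your explicit handling of the $x$-independent term as a broadcast by $1 + P + \cdots + P^{n-1}$ merely makes precise a step the paper leaves implicit in its identity $f\bigl(\sum_i x_i P^{i-1}, \mathbf y\bigr) = \sum_i P^{i-1} f(x_i, \mathbf y)$, so no substantive difference remains.
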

\begin{proof}
Since $x_i$'s degree is less than 2, $f(x_i, \mathbf y)$ is a linear function of $x_i$.
Thus we have
\begin{equation}
\small
\begin{split}
    &f(\mathsf{Pack}(x_1,\cdots,x_n), \mathbf y) 
     = f\left(\sum_{i = 1}^n x_i P^{i-1}, \mathbf y\right) 
    \\ & = \sum_{i = 1}^n P^{i-1} f(x_i, \mathbf y) 
     \le \sum_{i=1}^n P^i < N
\end{split}
\end{equation}
By \Cref{eq:basic-unpack}, we can extract the $j$-th term as follows:
\begin{equation}
\small
\begin{split}
    & \left\{ \mathsf{Unpack}[f(\mathsf{Pack}(x_1, \cdots, x_n), \mathbf y)] \right\}_i \\
    &= \left\lfloor \sum_{i=1}^n P^{i-1}f(x_i, \mathbf y) / P^{j-1} \right\rfloor \bmod P \\
    &= \sum_{i = j}^n P^{i - j} f(x_i, \mathbf y) \bmod P = f(x_j, \mathbf y).
\end{split}
\end{equation}
The second equality holds since 
\\
$\sum_{i=1}^{j-1} P^{i - 1}f(x_i, \mathbf y) \le \sum_{i=1}^{j-1} P^{i-1}(P-1) < P^{j-1}$.

\end{proof}

\Cref{thm:int-cond} provides conditions for packed computation on integer polynomials, however, we are interested in the real-valued polynomials in real applications.
Hence, we have to turn a real-valued polynomial into a fixed-point one according to our computation method proposed in \Cref{sec:two-party-computation}.
Recall that we use a scale factor $S$ to convert a real variable into an integer variable, and each multiplication results in an extra multiplied $S$.
Thus, if we write a real-valued polynomial of degree-$d$ in a monomial form 
\begin{equation}
    f(\mathbf x) = \sum_{i=1}^d\sum_{j = 1}^{n_i} c_{i,j} M_{i,j}(x_1, \cdots, x_n),
\end{equation}
where $M_{i,j}$ is a degree-$i$ monomial with no coefficient.
The fixed-point version (ignoring the small rounding error) is
\begin{equation}
\small
\begin{split}
    f^\text{(fixed)}(\mathbf x) =& \sum_{i=1}^d\sum_{j = 1}^{n_i} (c_{i,j} \bmod Q)
    \\ &\cdot M_{i,j}(x_1S \bmod Q, \cdots, x_nS \bmod Q)\cdot S^{d - i}.
\end{split}
\end{equation}
Since $c_{i, j} \bmod Q > 0$ and $0 \le x_iS \bmod Q <Q$, we have $f^\text{(fixed)}(\mathbf x) \le f^\text{(fixed)}(Q, \cdots, Q)$.
In conclusion, to correctly compute polynomials with data packing, we have to choose 
\begin{equation}
\label{eq:exact-pack}
    P > f^\text{(fixed)}(Q, \cdots, Q).
\end{equation}
We denote this packing method as \textit{exact packing} since the packed computation result is exactly the same with the no-packing case.
\subsubsection{Optimizing $P$ According to Error-Bound}
While in \eqref{eq:correct-pack}, we assume the unpacked numbers are strictly equal to the actual ones.
This is not essential for correct computation, since the lower bits are negligible after being converted to float values.
Specifically, as we choose the scale factor to be $S$, we allow an error of $S^{l-1}$ for values of level $S^l$ so that the corresponding real-valued error is at most $1/S$.
Thus, \eqref{eq:correct-pack} becomes
\begin{equation}
\begin{split}
    \mathsf{Unpack}[f(\mathsf{Pack}(x_1, \cdots, x_n), \mathbf y)]_i \le f(x_i, \mathbf y) \cdot S^{l-1}.
\end{split}
\end{equation}
To do this, we only need a small modification of \Cref{thm:int-cond} by changing \eqref{eq:exact-pack} to 
\begin{equation}
    S^{l-1} \cdot P > f^\text{(fixed)}(Q, \cdots, Q)  \text{ and } N \ge S^{l-1}P^{n}.
\end{equation}
Through this, we can make $P$ smaller so that more values can be packed into one plaintext.
Notice that in this case, we require $P$ to be a multiple of $Q$.
This is easy to achieve since we choose both $P, Q$ as powers of 2.
We denote this packing method as \textit{approximate packing} as certain less significant bits in computation results are not correct.

\subsubsection{Choose $Q$ According to $S$}
Since we use $Q$ as the modulus for encoding, we have $\mathsf{Encode}(x;Q,S) = x$ when $x \in \left[\dfrac{Q}{2S}, \dfrac{Q}{2S} \right)$.
We notice that $f^\text{(fixed)}$ with degree $d$ has a scale factor of $S^d$, hence, to ensure correct decoding, we have to ensure that $f(\mathbf x) \in \left[\dfrac{Q}{2S^d}, \dfrac{Q}{2S^d} \right)$.
This leads to the choice of $Q > 2S^d \sup |f(\mathbf x)|$.

\subsubsection{Example}
To show the advantage of our data packing method, we show an example as follows.
Suppose $N > 2^{2048}$ and the function to compute is $\sum_{j=1}^{16} x_iy_j$, where $x_i, y_j < Q = 2^{55}, S = 2^{23}$
(this means the corresponding real-value computation result lies in $[-256, 256)$).
We want to pack multiple $x_i$'s into one plaintext.
In the conventional Paillier scheme, one plaintext only contains one value.
Using the exact packing \eqref{eq:exact-pack}, the mininum slot size is $16\cdot (2^{55})^2 = 2^{114}$, and using the error-bound-based packing, the minimum slot size is reduced to $2^{114 - 23} = 2^{91}$.
The numbers of slots using these two packing methods are $17$ and $22$ respectively.
Thus, we can achieve up to $\approx 22\times$ speed up in both computation and communication.
%


\subsection{Secure Repacking}
In many cases, we only apply data packing on a certain variable during a specific computation process, instead of packing the same variable throughout the computation.
In order to make $\mathcal A$ correctly unpack and repack the intermediate values, we can not simply sample the mask $r$ from $\mathbb Z_N$ and apply it to the packed value.
Instead, the mask $r$ is sampled for each packed element individually, and $x + r < P$ for each packed element $x$.
Thus, $\mathcal A$ can decrypt each masked element, and rearrange them in any desired order.
Specifically, we describe the SecureRepack protocol in \Cref{alg:srepack}.
Here, we use two matrices $I\in \mathbb Z^{m\times n}$ and $J\in \mathbb Z^{m' \times n'}$ to denote the previous packing layout and repacked layout.
Specifically, the elements in $I$ and $J$ are unique keys of the packed variables, and the corresponding values of each row are packed in one ciphertext.
The numbers of columns (i.e. $n, n'$) denote the numbers of elements in a packed ciphertext, 
and the numbers of rows (i.e. $m, m'$) denote the total numbers of ciphertexts.

\begin{algorithm}[h!]
\begin{algorithmic}[1]
\small
\caption{Secure repacking of intermediate variables}
\label{alg:srepack}
\Procedure{SecureRepack}{$I\in \mathbb Z^{m\times n}, J\in \mathbb Z^{m'\times n'}$}
    \State $\mathcal B$ creates an array $E_I\gets [\ ]$ to store masked packed ciphertexts
    \ForAll{row in $I$}
        \State $\mathcal B$ retrieves the corresponding encrypted packed value $C = \text{Enc}[\mathsf{Pack}(c_{i, 1}, \cdots, c_{i, n})]$
        \State $\mathcal B$ randomly picks $|C|$ random elements $r_{i, 1}, \cdots, r_{i, n} \stackrel{\$}{\gets} \mathbb Z_{N - X}$ 
        \State // $X$ is the upper bound of the values
        \State $\mathcal B$ computes $C' = C \oplus \text{Enc}[\mathsf{Pack}(r_{i,1}, \cdots, r_{i, n})]$
        \State $\mathcal B$ adds $C'$ to $E_I$
    \EndFor
    \State $\mathcal B$ sends $E_I$ to $\mathcal A$
    \State $\mathcal A$ creates an array $F_I\gets [\ ]$ to store individual masked elements
    \For{$i=1$ to $m$}
        \State $A$ decrypts $E_I[i]$ and gets $x_{i, 1} + r_{i, 1}, \cdots, x_{i, n} + r_{i, n}$
        \For{$j=1$ to $n$}
            \State $A$ adds tuple $(I_{i,j}, x_{i, j} + r_{i, j})$ to $F_I$
        \EndFor
    \EndFor
    \State $\mathcal A$ creates an array $E_J$ to store packed ciphertexts
    \For{$k = 1$ to $m'$}
        \State $\mathcal A$ creates an array $X\gets [\ ]$ to store packed elements
        \For{$l = 1$ to $n'$}
            \State $\mathcal A$ adds $E_I[i',j']$ to $X$, where $I_{i',j'} = J_{k,l}$
        \EndFor
        \State $\mathcal A$ adds $\text{Enc}[\mathsf{Pack}(X)]$ to $E_J$
    \EndFor
    \State $\mathcal A$ sends $E_J$ to $\mathcal B$.
\EndProcedure
\end{algorithmic}
\end{algorithm}

\section{Decentralized Social Matrix Factorization}

In this section, we describe the problem of decentralized social recommendation and put forward two solutions. The first one is based on bipartite computation, and the second, which is our principal approach known as PADER, involves computation in a natural order. Additionally, data packing is utilized in both solutions.
\subsection{Problem Definition}
First, we formally define the decentralized social recommendation problem as follows:
\begin{enumerate}
    \item Each user or seller (shop, restaurant, website, etc) is a node in the decentralized network.
    \item The sellers keep their own item embeddings private, and the users also keep their embeddings private.
    \item After a user interacts (e.g., viewing, clicking, purchasing, ordering a dish) with an item, the user has a rating of the item, e.g., like/dislike, or a score between $1 \sim 5$. We assume those interactions and ratings are also sensitive and being kept private by users.
    However, the seller also knows the item set the user interacted with because the item interacted/rated by the user must be provided by the seller.
    For example, the items are displayed on the website (so that the user can have different kinds of interactions) or sold in offline shops (so that the user can give a rating).
    Therefore, we apply weights to different interactions to model different interaction types.
    \item Users' social relations (e.g., friendships with other users) are kept secret from the sellers or any other users.
\end{enumerate}
In conclusion, our problem can be summarized as learning user/item embeddings $U$ and $V$ to approximate the rating matrix $R$, while keeping $U$, $V$, $R$, and the social relationships private to their owners, while the indices of non-zero entries in $R$ are shared among the corresponding user and seller.

\subsection{Secure SGD for Social Recommendation}
Matrix factorization can be efficiently solved via the SGD (Stochastic Gradient Descent) algorithm~\cite{zhangtong_sgd_2004}.
Since our recommendation system is decentralized, we consider each SGD step to be performed between a seller (denoted as $\mathcal A$ and we assume it has the secret key) and a user (denoted as $\mathcal B$).
Suppose that the user has interacted with $n$ items of the seller with the ratings $r_1, \cdots, r_n$ and weights $w_1, \cdots, w_n$.
The user's embedding is $\mathbf u$, and the items' embeddings are $\mathbf v_1, \cdots, \mathbf v_n$.
Also, we assume that the user has $m$ friends, whose embeddings are $\mathbf f_1, \cdots, \mathbf f_m$.
Thus, the gradient computation between one user and one recommender can be written as:
\begin{equation}
\small
\label{eq:mf-sgd}
\begin{cases}
\begin{split}
    \dfrac12 \nabla_{u_p} = & \sum_{i=1}^n\left[w_i\left(\sum_{q=1}^ku_qv_{i,q} - r_i\right)v_{i, p}\right] 
    \\ & + \lambda_S \dfrac1m \sum_{j=1}^m (f_{j,p} - u_p),
\end{split}
    \\[1em]\displaystyle
    \dfrac12 \nabla_{v_p} = \sum_{i=1}^n w_i \left(\sum_{q=1}^ku_qv_{i.q} - r_i\right)u_p.
\end{cases}
\end{equation}
For convenience, we eliminate the bias term and the regularization term in \eqref{eq:smf-loss}.
The bias term can be represented as a fixed constant 1 in the user/item embeddings, while the regularization term can be computed locally by shrinking the embeddings each round.

We can see that \eqref{eq:mf-sgd} is a polynomial of the user and seller's inputs, along with the friends' embeddings which can directly become an intermediate value.
To do this, a friend of the user just encrypts his user embedding using the seller's public key and sends the encrypted value to the current user.

After the SGD computation of one iteration is finished, the gradient on user embedding $\partial L/\partial u_p$ is revealed to the user, but the gradients of item embeddings $\partial L/\partial v_p$ shall not be directly revealed to the recommender since it is proportional to $\mathbf u$ which leaks the user privacy.
To tackle this problem, we can simply perform an aggregation on $\partial L/\partial v_p$ by summing the gradient computed by different users together before revealing it to the seller.

\subsection{Bipartite Computation}
To perform bipartite computation described in \Cref{alg:bipartite}, we rewrite Equation \eqref{eq:mf-sgd} as follows:
\begin{equation}
\small
\label{eq:bipartite-sgd}
\begin{cases}
\begin{split}
    \dfrac12 \nabla_{u_p} =
    & \sum_{i=1}^n \sum_{q=1}^k \underbrace{w_iu_q}_{\mathcal B}\underbrace{v_{i,q}v_{i,p}}_{\mathcal A}
    - \sum_{i=1}^n \underbrace{w_ir_{i}}_{\mathcal B} \underbrace{v_{i,p}}_{\mathcal A}\\
    +& \sum_{j=1}^m \underbrace{(\lambda_S/m)}_{\mathcal B}\underbrace{f_{j,p}}_{\mathcal I}
    - \sum_{j=1}^m \underbrace{(\lambda_S/m) u_p}_{\mathcal B}, 
\end{split}
    \\ \displaystyle       
    \dfrac12 \nabla_{v_p} = 
    \sum_{i=1}^n \sum_{q=1}^k \underbrace{w_iu_qu_p}_{\mathcal B}\underbrace{v_{i,q}}_{\mathcal A} 
    - \underbrace{ \sum_{i=1}^n w_ir_{i}u_p}_{\mathcal B}.
\end{cases}
\end{equation}
We can see that each term in the equation is either computed on one party or a product of two parties' locally computable values.
The detailed steps of this bipartite computation are presented in Algorithm 8.

\begin{algorithm}[H]
\small
\caption{Bipartite Computation of Secure SGD}
\begin{algorithmic}[1]

\Procedure{SecureSGD}{Seller $\mathcal A$, User $\mathcal B$} \Comment{Bipartite}
\State $\mathcal B$ initializes embedding \(\mathbf u_p\), ratings \(r_i\) and interaction weights \(w_i\) (\(i = 1\) to \(n\))
\State $\mathcal A$ initializes embedding \(\mathbf {v_i}\) (\(i = 1\) to \(n\))


    \State $\mathcal A$ computes and sends \(\text{Enc}(\mathsf{Pack}_{p}[v_{i,q} v_{i,p}])\), \(\text{Enc}(\mathsf{Pack}_{p}[v_{i,p}])\), \(\text{Enc}(v_{i,q})\) (\(i = 1\) to \(n\), \(q = 1\) to \(k\)) to $\mathcal B$.
    \State Friend \(j\) of $\mathcal B$ computes and sends \(\text{Enc}(\mathsf{Pack}_{p}[f_{j,p}])\) to $\mathcal B$ (\(j = 1\) to \(m\))

    \State $\mathcal B$ computes \(\text{Enc}(\mathsf{Pack}_{p}[\nabla_{u_p}])\) and \(\text{Enc}(\mathsf{Pack}_{p}[\nabla_{v_p}])\) using \Cref{eq:bipartite-sgd} with \(\mathsf{SecureAdd}\) and \(\mathsf{SecureMul}\).
    \State $\mathcal B$ masks \(\text{Enc}(\mathsf{Pack}_{p}[\nabla_{u_p}])\) as \(\text{Enc}(\mathsf{Pack}_{p}[\nabla_{u_p}^\text{mask}])\), sends \(\text{Enc}(\mathsf{Pack}_{p}[\nabla_{u_p}^\text{mask}])\) and   \(\text{Enc}(\mathsf{Pack}_{p}[\nabla_{v_p}])\) to $\mathcal A$.
    
    \State $\mathcal A$ decrypts and unpacks \(\text{Enc}(\mathsf{Pack}_{p}[\nabla_{v_p}])\) to get \(\nabla_{v_p}\) and updates \(v_p\).
    
    \State $\mathcal A$ decrypts and unpacks \(\text{Enc}(\mathsf{Pack}_{p}[\nabla_{u_p}^\text{mask}])\) to get \(\nabla_{u_p}^\text{mask}\) and sends it to $\mathcal B$.

    \State $\mathcal B$ unmasks \(\nabla_{u_p}^\text{mask}\) to get \(\nabla_{u_p}\) and updates \(u_p\).

\EndProcedure

\end{algorithmic}
\end{algorithm}

\subsubsection{Data Packing}
Since the computation is performed on all $k$ dimensions of the embedding, we apply data packing to the embedding dimension, i.e., the $p$ subscript in $v_{i,q}v_{i,p}, v_{i,p}$, and $ f_{j,p}$.
The packed ciphertexts are represented by $\text{Enc}(\mathsf{Pack}_p[v_{i,p}]), \text{Enc}(\mathsf{Pack}_p[v_{i,q}v_{i,p}])$, and $\text{Enc}(\mathsf{Pack}_p[f_{j,p}])$.
Here, we use $\mathsf{Pack}_p$ to represent the packing along the subscript $p$.
As for the choice of $P$ given $Q$, we have: $P > (nk + n + m)Q^2 + mQS$ using \Cref{thm:int-cond}.

\subsubsection{Communication}
The bipartite computation only requires a constant number of communication rounds.
In this case, three transmissions between the user and the seller are needed.

\begin{enumerate}
    \item \textbf{(Algorithm 8, Line 4):} 
    The seller sends all needed ciphertexts to the user.
    There are a total of $nk' + nkk' + nk$ ciphertexts.
    %
    \item \textbf{(Algorithm 8, Line 7):} 
    The user sends the ciphertext of item gradients and masked user gradient to the seller. There are a total of $nk' + k'$ ciphertexts.
    \item \textbf{(Algorithm 8, Line 9):} 
    The seller sends the plaintext of the masked user gradient to the user.
    There are a total of $k'$ plaintexts.
\end{enumerate}
Here $k'$ is the number of ciphertexts when $k$ plaintexts are packed.
The total communication size between the user and recommender is $nkk' + 2nk' +  nk+ 2k'$ when data packing is applied, and is $nk^2 + 2nk + 2k$ otherwise.

\subsection{Natural Order Computation}
A natural way to compute Equation \eqref{eq:mf-sgd} is to first calculate the error term $e_i$ as follow:
\begin{equation}
\label{eq:e_i}
e_i = w_i \sum_{q=1}^k \left(u_qv_{i,q} - r_i\right) = 
\sum_{q=1}^k (\underbrace{w_iu_q}_\mathcal B \underbrace{v_{i,q}}_\mathcal A - \underbrace{r_i}_\mathcal B)
\end{equation}
Next, we compute the gradients as follows:
\begin{equation}
\label{eq:natural-sgd}
\begin{cases}
\begin{split}
    \dfrac12 \nabla_{u_p} = \sum_{i=1}^n \underbrace{e_i}_\mathcal I \underbrace{v_{i, p}}_\mathcal A  + \lambda_S \dfrac1m \sum_{j=1}^m (\underbrace{f_{j,p}}_\mathcal I - \underbrace{u_p}_\mathcal B), 
\end{split}
    \\[2em] \displaystyle
    \dfrac12 \nabla_{u_p} = \sum_{i=1}^n \underbrace{e_i}_\mathcal I \underbrace{u_p}_\mathcal B.

\end{cases}
\end{equation}
While it is not a bipartite form, our proposed protocol enables this computation.
%
%
The specific process of this natural order computation is demonstrated in Algorithm 9.

\begin{algorithm}[H]
\small
\caption{Natural Order Computation of Secure SGD}
\begin{algorithmic}[1]
\Procedure{SecureSGD}{Seller $\mathcal A$, User $\mathcal B$} \Comment{Natural order}
\State $\mathcal B$ initializes embedding \(\mathbf u_p\), ratings \(r_i\) and interaction weights \(w_i\) (\(i = 1\) to \(n\))
\State $\mathcal A$ initializes embedding \(\mathbf {v_i}\) (\(i = 1\) to \(n\))

    \State $\mathcal A$ sends \(\text{Enc}(\mathsf{Pack}_{q}[v_{i,q}])\) (\(i = 1\) to \(n\)) to $\mathcal B$.
    \State $\mathcal B$ computes \(\text{Enc}(\mathsf{Pack}_{q}[e_i])\) using \Cref{eq:e_i} with \(\mathsf{SecureAdd}\) and \(\mathsf{SecureMul}\).
    \State $\mathcal B$ masks \(\text{Enc}(\mathsf{Pack}_{q}[e_i])\) as \(\text{Enc}(\mathsf{Pack}_{q}[e_i^\text{mask}])\) and sends it to $\mathcal A$.
    \State $\mathcal A$ computes \(\sum_{i=1}^n {e_i^\text{mask}} {v_{i, p}}\) with \(\mathsf{SecureAdd}\) and \(\mathsf{SecureMul}\) and repacks \(\text{Enc}(\mathsf{Pack}_{q}[e_i^\text{mask}])\) to get \(\text{Enc}(\mathsf{Pack}_{p}[e_i^\text{mask}])\) with \(\mathsf{SecureRepack}\)
    \State $\mathcal A$ sends \(\sum_{i=1}^n {e_i^\text{mask}}{v_{i, p}}\), \(\text{Enc}(\mathsf{Pack}_{p}[e_i^\text{mask}])\) and \(\text{Enc}(\mathsf{Pack}_{p}[v_{i,p}])\) to $\mathcal B$.
    \State $\mathcal B$ unmasks, get \(\sum_{i=1}^n {e_i}{v_{i, p}}\) and \(\text{Enc}(\mathsf{Pack}_{p}[e_i])\).
    \State $\mathcal B$ computes \(\text{Enc}(\mathsf{Pack}_{p}[\nabla_{u_p}])\) and \(\text{Enc}(\mathsf{Pack}_{p}[\nabla_{v_p}])\) using \Cref{eq:natural-sgd} with \(\mathsf{SecureAdd}\) and \(\mathsf{SecureMul}\).
    \State $\mathcal B$ masks \(\text{Enc}(\mathsf{Pack}_{p}[\nabla_{u_p}])\) as \(\text{Enc}(\mathsf{Pack}_{p}[\nabla_{u_p}^\text{mask}])\), sends \(\text{Enc}(\mathsf{Pack}_{p}[\nabla_{u_p}^\text{mask}])\) and   \(\text{Enc}(\mathsf{Pack}_{p}[\nabla_{v_p}])\) to $\mathcal A$.
    
    \State $\mathcal A$ decrypts and unpacks \(\text{Enc}(\mathsf{Pack}_{p}[\nabla_{v_p}])\) to get \(\nabla_{v_p}\) and updates \(v_p\).
    
    \State $\mathcal A$ decrypts and unpacks \(\text{Enc}(\mathsf{Pack}_{p}[\nabla_{u_p}^\text{mask}])\) to get \(\nabla_{u_p}^\text{mask}\) and sends it to $\mathcal B$.

    \State $\mathcal B$ unmasks \(\nabla_{u_p}^\text{mask}\) to get \(\nabla_{u_p}\) and updates \(u_p\).
\EndProcedure
\end{algorithmic}
\end{algorithm}

\subsubsection{Data Packing}
In the first step to compute $e_i$, we pack different items together, i.e., the packing is applied on subscript $q$.
In the second step, we pack along the embedding dimension as in the bipartite case, i.e., the packing is applied on subscript $p$.
As for the choice of $P$ given $Q$, we have $P > nk(3Q^3 + 2Q^2S) + mQ^3$.

\subsubsection{Communication}
In the natural order computation, five transmissions
between the user and the seller are needed.
\begin{enumerate}
    \item \textbf{(Algorithm 9, Line 4):} The seller first sends $n'k$ ciphertexts for encrypted item embedding packed along the the item dimension for the computation of $\sum w_iu_qv_{i,q}$.
    \item \textbf{(Algorithm 9, Line 6):}
    The user sends back $n'$ ciphertexts for masked ciphertext of $e_i$, where $n'$ is the number of ciphertexts when packing $n$ ciphertexts together.
    \item \textbf{(Algorithm 9, Line 8):} 
    The seller send back the masked $\sum e_iv_{i,p}$, which have a size of $k'$. Considering that the second step requires a different packing dimension with the first step, the seller has to repack the masked ciphertext of $e_i$, resulting in $n$ transferred ciphertexts. Also, The seller need to send $nk'$ ciphertexts for encrypted item embedding packed along the the embedding dimension for the computation of $e_iv_{i, p}$. There are a total of $k' + n + nk'$ ciphertexts.
    \item \textbf{(Algorithm 9, Line 11):} 
    The user sends the ciphertext of item gradients and masked user gradient to the seller. There are a total of $nk' + k'$ ciphertexts.
    \item \textbf{(Algorithm 8, Line 13):} 
    The seller sends the plaintext of the masked user gradient to the user. There are a total of $k'$ plaintexts.
\end{enumerate}
Overall, when using data packing, the total communication size is $n'k + n' + n + 3k' + 2nk'$, and $3nk + n + 3k$ without data packing.

\section{Security Analysis}
In this section, we conduct a comprehensive security analysis of the PADER system. The primary participants in the PADER system are users and sellers. Users are concerned with safeguarding their embeddings, rating data, and social relationships, while sellers aim to protect their item information. We begin by defining the threat model and then proceed to prove the security of the proposed protocol under the semi-honest adversary model, ultimately demonstrating that the PADER system effectively protects the privacy of its participants.
\subsection{Threat Model}
This paper focuses on the threat model of semi-honest adversaries. In this model, semi-honest adversaries will strictly adhere to the protocol for operations but may attempt to deduce sensitive information about other parties from the data obtained during protocol execution. For example, a user might try to extract other users' ratings of goods from the encrypted data provided by the seller, while the seller might attempt to obtain the user's social relationships and other private data through interactions with the user.

Under this threat model, our protocol aims to protect the following: (i) The embeddings, rating data, and social relationships of users are not known to the seller or other users; (ii) The item embeddings of the seller are not known to the user; (iii) All intermediate computation results remain confidential to all participants during the computation process, and only the final computation result will be disclosed to the authorized participants.

\subsection{Security of arithmetic circuits}

To prove that our protocol is secure in the presence of static semi-honest adversaries, we use the simulation-based method.
The simulation-based security proof can be described as follows:
For any corrupted party $\mathcal P$ (either $\mathcal A$ or $\mathcal B$ in our case), we can use a simulator $\mathcal S$ to simulate the view (defined as the messages received by $P$ during the execution of the protocol) of $\mathcal P$ based on its own input and the designated output, which the adversary can not distinguish with the view obtained by a real execution of the protocol.

\begin{theorem}
\label{thm:security}
    Our proposed protocol (\Cref{alg:poly-compute}) securely evaluates the arithmetic circuit in the presence of static semi-honest adversaries.
\end{theorem}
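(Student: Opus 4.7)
The plan is to apply the standard simulation paradigm for semi-honest two-party computation. Two separate simulators are constructed, one for each corrupted party; each simulator takes as input only that party's private input and (for the designated output party) the final output, and produces a transcript computationally indistinguishable from the real view. Because \Cref{alg:poly-compute} is the sequential composition of the gate-level sub-protocols from \Cref{alg:sadd} and \Cref{alg:smul}, the sequential composition theorem for semi-honest MPC lets us reduce the security of the whole protocol to the security of each individual sub-protocol, together with the reveal step at the end.

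For corrupted $\mathcal{B}$, every incoming message is a Paillier ciphertext: either an encryption of an input of $\mathcal{A}$, an encryption of the product returned by $\mathcal{A}$ in $\mathsf{SecureMul}_{\mathcal{IA}}$ or $\mathsf{SecureMul}_{\mathcal{II}}$, or (when $\mathcal{B}$ is the designated recipient) the final unmasked plaintext. The simulator $\mathcal{S}_{\mathcal{B}}$ replaces every ciphertext with a fresh encryption of zero and plugs in the supplied final output when required. Indistinguishability follows from a hybrid argument that swaps ciphertexts one at a time, each hop being indistinguishable by the IND-CPA security of the Paillier cryptosystem stated in Section 3.2.

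For corrupted $\mathcal{A}$, the messages $\mathcal{A}$ decrypts are of the form $v - r \bmod N$ in $\mathsf{SecureMul}_{\mathcal{IA}}$ or of the pair form $(v_1 + r_1, v_2 + r_2) \bmod N$ in $\mathsf{SecureMul}_{\mathcal{II}}$, where each mask is drawn uniformly from $\mathbb{Z}_N$ by $\mathcal{B}$. Since addition by a uniform element of $\mathbb{Z}_N$ yields a uniform element of $\mathbb{Z}_N$ that is statistically independent of the underlying value, the simulator $\mathcal{S}_{\mathcal{A}}$ samples each such masked value uniformly at random from $\mathbb{Z}_N$; this matches the real distribution exactly (information-theoretic hiding). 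For the reveal branch in which $\mathcal{A}$ obtains the plaintext output directly, the simulator supplies the designated output, and when $\mathcal{B}$ is the designated recipient the simulator mimics the masked decryption $v + r$ with a fresh uniform element of $\mathbb{Z}_N$.

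The hard part is handling $\mathsf{SecureMul}_{\mathcal{II}}$ cleanly. Here $\mathcal{A}$ sees two correlated masked plaintexts $v_1 + r_1$ and $v_2 + r_2$, and one must argue that this pair leaks nothing about $v_1 v_2$ beyond what $\mathcal{A}$'s own inputs already imply. The crucial property is that $r_1, r_2$ are independently uniform in $\mathbb{Z}_N$, so the joint distribution $(v_1 + r_1, v_2 + r_2)$ is uniform over $\mathbb{Z}_N^2$ regardless of $v_1, v_2$, which lets $\mathcal{S}_{\mathcal{A}}$ sample the pair freely. A secondary technicality is that the re-randomization $\oplus \text{Enc}(0)$ in \Cref{alg:smul} is essential: without it, the ciphertext $\mathcal{A}$ returns to $\mathcal{B}$ would be a deterministic function of ciphertexts $\mathcal{B}$ already holds, and the IND-CPA-based hybrid argument for $\mathcal{B}$'s view would break. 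Once these points are verified, stitching the per-gate simulators together via sequential composition completes the proof of \Cref{thm:security}.
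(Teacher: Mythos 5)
Your proposal follows essentially the same route as the paper: a simulation-based argument with two simulators, where $\mathcal B$'s view (all Paillier ciphertexts) is simulated by fresh encryptions and reduced to IND-CPA security via a hybrid argument, and $\mathcal A$'s view (the additively blinded values it can decrypt in $\mathsf{SecureMul}_{\mathcal{IA}}$/$\mathsf{SecureMul}_{\mathcal{II}}$ and in the reveal step) is simulated by uniform elements of $\mathbb Z_N$, exactly because the masks are uniform; your remark that the $\oplus\,\text{Enc}(0)$ re-randomization is what licenses treating $\mathcal A$'s replies as fresh ciphertexts is precisely the content of the paper's auxiliary proposition on Paillier re-randomization. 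The one substantive element of the paper's proof that your argument does not cover is the $\mathsf{SecureRepack}$ step (\Cref{alg:srepack}): there the masks cannot be drawn uniformly from $\mathbb Z_N$, since each masked slot must stay below the packing bound so that $\mathcal A$ can unpack and repack correctly, so the masked value of $z$ is distributed on $[z, z+M)$ rather than uniformly on $\mathbb Z_N$ and perfect hiding fails. The paper handles this by simulating with $U[0,M)$ and bounding the statistical distance of the two views by roughly $2n_2 Z/M = 2n_2 Z/2^m$, which is negligible in the mask length. Strictly read, \Cref{alg:poly-compute} uses only $\mathsf{SecureAdd}$/$\mathsf{SecureMul}$ and your proof suffices for it, but the paper intends the theorem to cover executions with data packing and repacking as used in PADER; to match that scope you would need to add this statistical-closeness argument, since your ``addition of a uniform element of $\mathbb Z_N$'' reasoning does not apply to the bounded repacking masks. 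Your explicit appeal to the sequential composition theorem, versus the paper's direct simulation of the whole transcript, is a stylistic rather than substantive difference.
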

We prove the theorem in two steps, i.e., simulating $\mathcal A$'s view and $\mathcal B$'s view respectively.
First, we create a simulator $\mathcal S_A$ to simulate $\mathcal A$'s view.
The view of  $\mathcal A$ is 
\begin{equation}
\begin{split}
    \mathsf{view}^\text{real}_\mathcal A(x, y) = (&\text{Enc}(r_1), \cdots, \text{Enc}(r_{n_1}), 
    \\
    & \text{Enc}(s_1), \cdots, \text{Enc}(s_{n_2})),
\end{split}
\end{equation}
where we use $r_1, \cdots, r_{n_1}$ to denote the ciphertexts of masked elements in $\mathbb Z_N$ used in the $\mathsf{SecureMul}$ protocols, and $s_1, \cdots, s_{n_2}$ are the masked elements used in the $\mathsf{SecureRepack}$.
We use $x, y$ to denote the input for $\mathcal A$ and $\mathcal B$ respectively.

We notice that $r_1, \cdots, r_{n-1}$ are ciphertexts of independently uniformly random elements on $\mathbb Z_N$, which can be directly simulated.
On the other hand, the masked values used in repacking are not uniformly random due to the mask is not uniformly sampled from $\mathbb Z_N$ but from $\mathbb Z_M$ such that $P > M \gg X$, where $X$ is the predetermined upper bound of the values to repack.
In this case, for some value $z$, the masked value is drawn from $U[Z, Z+M)$.
However, although we do not know $x$, we can still sample uniformly from $U[0, M)$ to simulate the masked value.
It is easy to see, that those two distributions are identical except for an exponentially small (in terms of the security parameter) portion of $2X/M$.

Now we show that such simulation can generate the view that is indistinguishable from any probabilistic polynomial time algorithm $D$, i.e., for any such $D$. we have
\begin{equation}
\label{eq:indistinguishable-view}
\footnotesize
\begin{split}
        & |\text{Pr}[D(\mathsf{view}_\mathcal A^\text{real}) = 1] - \text{Pr}[D(\mathsf{view}_\mathcal A^\text{sim}) = 1]| 
       \le 1/\mu(n),
\end{split}
\end{equation}
where $\mu(n)$ is some negligible function, i.e., for any polynomial $p(n)$, $\mu(n) < 1/p(n)$ for sufficiently large $n$, and here $n$ is the security parameter.
Notice that, we can write one term in \eqref{eq:indistinguishable-view} by the Bayes' law as
\begin{equation}
    \text{Pr}[D(\mathsf{view}) = 1] = \sum_{V} P(\mathsf{view} = V)I_{D(V) = 1}.
\end{equation}
Hence, we only have to consider $\Delta V$ which is the set of $V$ such that $P[\mathsf{view}_\mathcal{A}^\text{real}(x,y) = V] \ne P[\mathsf{view}_\mathcal{A}^\text{sim}(x,y) = V].$
We can then find an upper bound of \eqref{eq:indistinguishable-view} to be 
\begin{equation}
    \sum_{V\in \Delta V} [\text{Pr}(\mathsf{view}_\mathcal{A}^\text{real} = V) + \text{Pr}(\mathsf{view}_\mathcal{A}^\text{sim} = V)].
\end{equation}
In our case, $\Delta V$ corresponds to the case such that $\bigcap_{i=1}^{n_2}\{V: r'_i \in [0, z_i) \cup [M, z_i + M)\}$, where $0 \le z_i < Z$ is the $i$-th value to repack.
Thus, we have: 
\begin{equation}
\small
\begin{split}
&\text{Pr}(\mathsf{view}_\mathcal{A}^\text{real} = \Delta V) = 1 - \prod_{i=1}^{n_2} (1 - z_i/M)
\\
&=1 - (1 - Z/M)^{n_2} \le n_2 Z/M,
\end{split}
\end{equation}
and similarly,
\begin{equation}
\small
\begin{split}
&\text{Pr}(\mathsf{view}_\mathcal{A}^\text{sim} = \Delta V) = 1 - \prod_{i=1}^{n_2} (1 - z_i/M) \le n_2 Z/M.
\end{split}
\end{equation}
Combine these together, we have:
\begin{equation}
\small
\begin{split}
    & |\text{Pr}[D(\mathsf{view}_\mathcal A^\text{real}) = 1] - \text{Pr}[D(\mathsf{view}_\mathcal A^\text{sim})= 1]| 
    \\
    & \le 2n_2 Z/M = 2n_2 Z/2^m,
\end{split}
\end{equation}
where $m = \log_2 M$ is proportional to the security parameter.
Hence, we can simulate $\mathcal A$'s view except for negligible probability.

Now consider the simulation of $\mathcal B$'s view, which can be written as
\begin{equation}
    \mathsf{view}_\mathcal{B} = (\text{Enc}(t_1), \cdots, \text{Enc}(t_{n_3})).
\end{equation}
The first step is to show that any ciphertext $\text{Enc}(t_i)$ can actually be viewed as a fresh new encryption of $t_i$.

\begin{proposition}
Suppose $c_0 = \text{Enc}(v_0; r_0)$ is a given ciphertext, and $c_1 = \text{Enc}(v_1, r_1)$ where $r_1$ is a random number uniformly drawn from $\mathbb Z^*_n$, then $c_0 \otimes c_1 = c_0 c_1 \bmod n^2$ has the same distribution with $\text{Enc}(v_0 + v_1, r)$ where $r$ is uniformly drawn from $\mathbb Z^*_n$.
\end{proposition}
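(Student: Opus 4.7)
The plan is to unwind the Paillier encryption formula and use the group structure of $\mathbb{Z}_n^*$. Recall that Paillier encryption has the form $\text{Enc}(v; r) = g^v \cdot r^n \bmod n^2$ where $g = n+1$ (or a similar fixed generator of the appropriate subgroup) and $r \in \mathbb{Z}_n^*$. So I would begin by writing
\[
c_0 = g^{v_0} r_0^n \bmod n^2, \qquad c_1 = g^{v_1} r_1^n \bmod n^2,
\]
and then compute
\[
c_0 \cdot c_1 \equiv g^{v_0 + v_1} (r_0 r_1)^n \pmod{n^2}.
\]
By definition this is $\text{Enc}(v_0 + v_1; r_0 r_1)$, so the homomorphic product is syntactically a Paillier ciphertext with randomness $r_0 r_1 \bmod n$.

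Next I would prove that $r_0 r_1 \bmod n$ is uniformly distributed on $\mathbb{Z}_n^*$ when $r_1$ is uniform on $\mathbb{Z}_n^*$ and $r_0$ is any fixed element of $\mathbb{Z}_n^*$. Since $\mathbb{Z}_n^*$ is a group under multiplication modulo $n$, left multiplication by the invertible element $r_0$ is a bijection on $\mathbb{Z}_n^*$; hence it maps the uniform distribution to itself. This gives that for every target value $r^* \in \mathbb{Z}_n^*$,
\[
\Pr[r_0 r_1 \equiv r^* \pmod n] = \Pr[r_1 \equiv r_0^{-1} r^* \pmod n] = \frac{1}{|\mathbb{Z}_n^*|},
\]
matching the distribution of a freshly sampled randomness $r$.

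Combining the two steps, the joint distribution of $(v_0 + v_1, r_0 r_1 \bmod n)$ equals that of $(v_0 + v_1, r)$ with $r$ uniform on $\mathbb{Z}_n^*$, and since the encryption map is a deterministic function of $(v, r)$, the resulting ciphertexts have identical distributions. I do not expect any real obstacle here: the only subtlety is to be careful that the claim concerns $r_0 r_1$ reduced modulo $n$ (the randomness slot) rather than modulo $n^2$ (where the ciphertext lives), and to note that since $\gcd(r_0, n) = \gcd(r_1, n) = 1$ the product $r_0 r_1$ indeed stays inside $\mathbb{Z}_n^*$ so that the re-randomization is well-defined.
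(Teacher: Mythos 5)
Your proposal is correct and follows essentially the same route as the paper: expand the Paillier ciphertexts, observe $c_0c_1 \equiv g^{v_0+v_1}(r_0r_1)^n \pmod{n^2}$, and note that multiplication by a fixed element of $\mathbb Z_n^*$ (your bijection argument; the paper phrases it as the coset of the full group being the group itself) sends the uniform distribution on $\mathbb Z_n^*$ to itself. The only detail the paper spells out that you gloss over is reducing the message $v_0+v_1$ modulo $n$, whose overflow contributes a fixed $n$-th power absorbed into the randomness slot --- this does not affect your bijection argument, so no gap results.
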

\begin{proof}
By the definition of Paillier cryptosystem, we have
\begin{equation}
\small
\begin{split}
    c_0c_1 & = g^{v_0}r_0^n \bmod n^2 \cdot g^{v_1}r_1^n \bmod n^2
    \\
    &= g^{v_0 + v_1} (r_0r_1)^n \bmod n^2
    \\
    &= g^{v_0 + v_1 \bmod n} (g^{n \lfloor \frac{v_0 + v_1}{n} \rfloor}r_0r_1 \bmod n)^n \bmod n^2,
\end{split}
\end{equation}
which is an encryption of $v_0 + v_1 \bmod n$.
Notice that the random factor is $g^{n \lfloor \frac{v_0 + v_1}{n} \rfloor}r_0r_1 \bmod n$, which is still coprime with $n$.
Consider the multiplicative group $G = (\mathbb Z_n^*, \times)$, since the coset of the complete group is the group itself, we have:
\begin{equation}
    \left( g^{n \lfloor \frac{v_0 + v_1}{n} \rfloor}r_0 \bmod n \right) \times G = G.
\end{equation}

Thus, the product $c_0c_1$ is still a fresh new encryption since its random factor is uniformly distributed.
This completes our proof.
\end{proof}

Now we construct the simulated view for $\mathcal B$ as follows:
\begin{equation}
    \mathsf{view}_{\mathcal B}^\text{sim} = (\text{Enc}(t_1'), \cdots, \text{Enc}(t'_{n_3}),
\end{equation}
where $t_1, \cdots, t_{n_3}$ are random numbers in $\mathbb Z_n$.
Then distinguishing between the simulated view and the real view is equivalent to distinguishing two (randomized) Paillier ciphertexts, which is considered impossible for any probabilistic polynomial adversary due to the IND-CPA security of Paillier.

In summary, we can use ciphertexts of random numbers to simulate both $\mathcal A$ and $\mathcal B$'s view, leading to the security under the semi-honest adversary.
This completes the proof for \Cref{thm:security}

\subsection{Security of PADER}
Based on the security of arithmetic circuits as established in Theorem 2, PADER's security against static semi-honest adversaries can be confirmed. In PADER, interactions between users and recommenders are modeled as polynomial computations involving sensitive data such as user embeddings, ratings, social relationships, and item embeddings.

PADER's security rests on two key elements: the Paillier cryptosystem's IND-CPA security for encrypting data, ensuring adversaries cannot derive useful information from encrypted data; and our secure computation protocols that keep all intermediate computational results confidential until the final result is revealed to authorized parties.

By combining the robust encryption properties of the Paillier cryptosystem with our secure computation protocols, PADER effectively protects user and recommender data privacy, safeguarding sensitive information from static semi-honest adversaries.

\section{Experiments}
To demonstrate the efficiency and performance of PADER, we conduct experiments on the synthetic data and two real-world 
 social recommendation datasets, i.e., Epinions~\cite{hamedani2021trust_rec_epinions} and Douban~\cite{mahao2011social_regularization}.
To evaluate the efficiency of PADER, We report the results on the computation and communication overheads during training and inference, in comparison with the bipartite computation, with or without data packing, and the state-of-the-art fully homomorphic cryptosystem CKKS~\cite{ckks2017}.
Moreover, the validation loss during training is also measured to show the correctness of PADER.

\begin{figure*}[ht]
    \centering
    \includegraphics[width=1\textwidth]{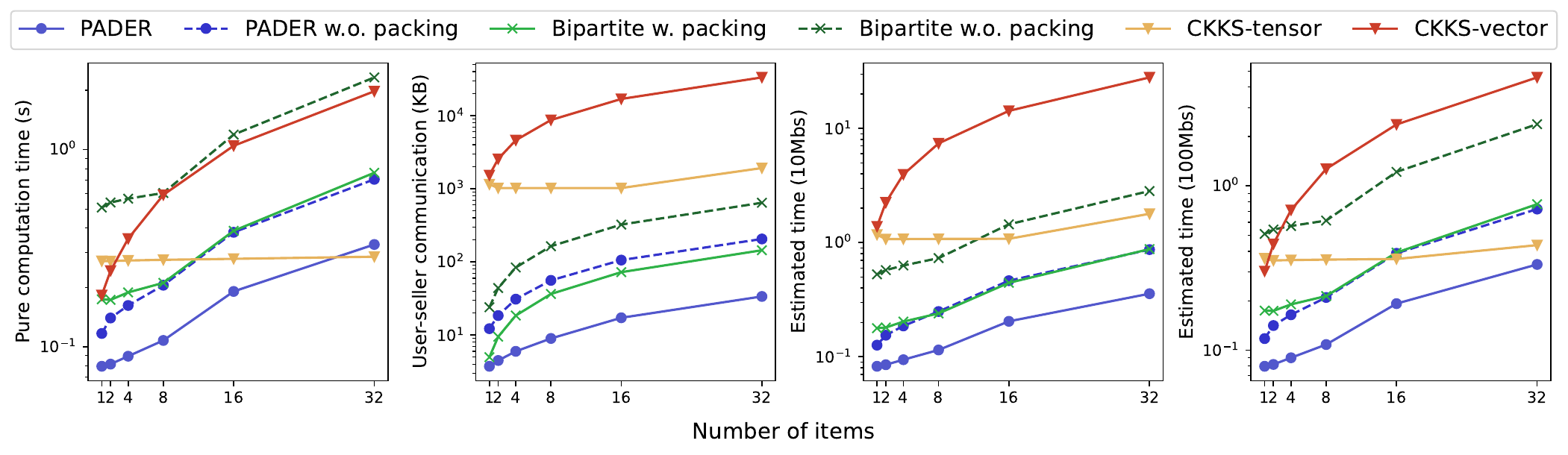}
    \caption{Results on different numbers of rated items, where the embedding dimension is 8 and the number of social connections is 10.
    The y-axis is in log-scale.}
    \label{fig:items}
\end{figure*}

\begin{figure*}[ht]
    \centering
    \includegraphics[width=1\textwidth]{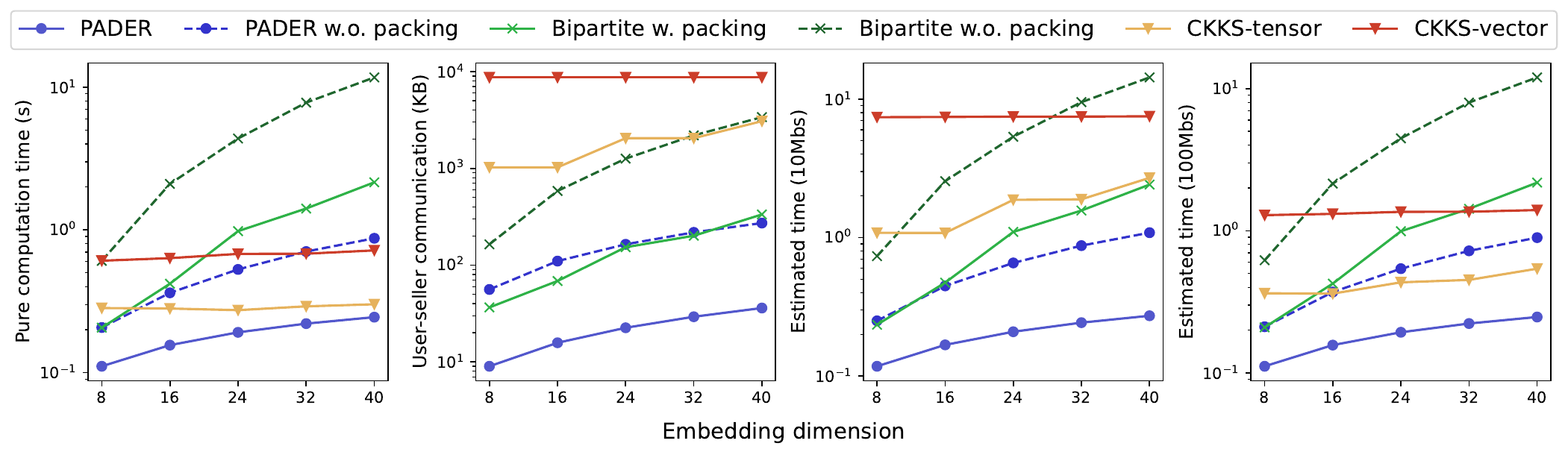}
    \caption{Results on different embedding dimensions, where the number of rated items is 8 and the number of social connections is 10.
    The y-axis is in log-scale.}
    \label{fig:dimensions}
\end{figure*}
\subsection{Experiment Setting}
For the implementation of the Paillier cryptosystem, we use Intel's Pallier Library~\cite{intel_paillier_lib}, which supports the DJN implementation for fast encryption/decryption~\cite{djn2010}.
The codes are compiled with OpenMP enabled, and the number of threads is set to 8 as we empirically found it has relatively good performance.
For all methods, we choose the precision to be 23 bits as suggested in \cite{courbariaux2014low_precision}, i.e., $S = 2^{23}$. The key length for Paillier is chosen to be 2048 to meet conventional security levels (112-bit security).
Based on this, we apply optimal data packing to both methods: for PADER (natural order computation), we choose $Q = 2^{80}$, and $P = 2^{256}$; for bipartite computation, we choose $Q = 2^{56}$ and $P = 2^{128}$.
Notice we choose $N$ to be the power of $2^{32}$ for the convenience of unpacking.
By this and given the plaintext length is 2048-bit, we can pack 8 plaintexts into one in PADER, and 16 in bipartite computation.

For the implementation of the CKKS cryptosystem, we adopt IBM's HElayers library~\cite{helayers}, which supports the Tile Tensor structure for high-efficiency data packing. The CKKS scheme is configured with a polynomial modulus degree of 8192, a scaling factor of $2^{30}$, and a ciphertext coefficient modulus of $2^{238}$. Consistent with the Paillier implementation, we also set the number of threads for HElayers to 8.
While the HElayers library supports both vector-based computation (which supports dot-product and vector-scalar multiplication) and tensor-based computation (which supports most matrix operations), we compare both in our experiments.

All experiments were conducted on a server featuring two Intel Xeon Silver 4116 CPUs (24 cores) @ 2.10 GHz.

\subsection{Training Benchmarks}
We conduct benchmarks on synthetic data to study the computation and communication cost in one training iteration regarding to number of items and embedding dimension.
The following results are reported:
\begin{itemize}
    \item Pure computation time, in which we ignore any communication costs.
    \item Total communication size between the user and seller.
    \item Estimated time consumption under low network bandwidth (10Mbps).
    \item Estimated time consumption under high network bandwidth (100Mbps).
\end{itemize}
The former two results are directly measured in experiments, as we simulate the network communication in a standalone server.
The latter two results are calculated as `pure time consumption + communication size/network bandwidth.

\begin{figure*}[ht]
    \includegraphics[width=\textwidth]{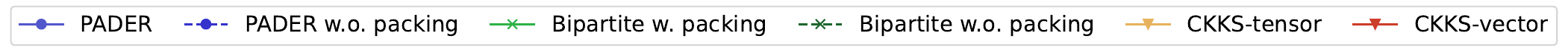}
    \centering
    \begin{subfigure}{0.48\textwidth}
        \includegraphics[width=\textwidth]{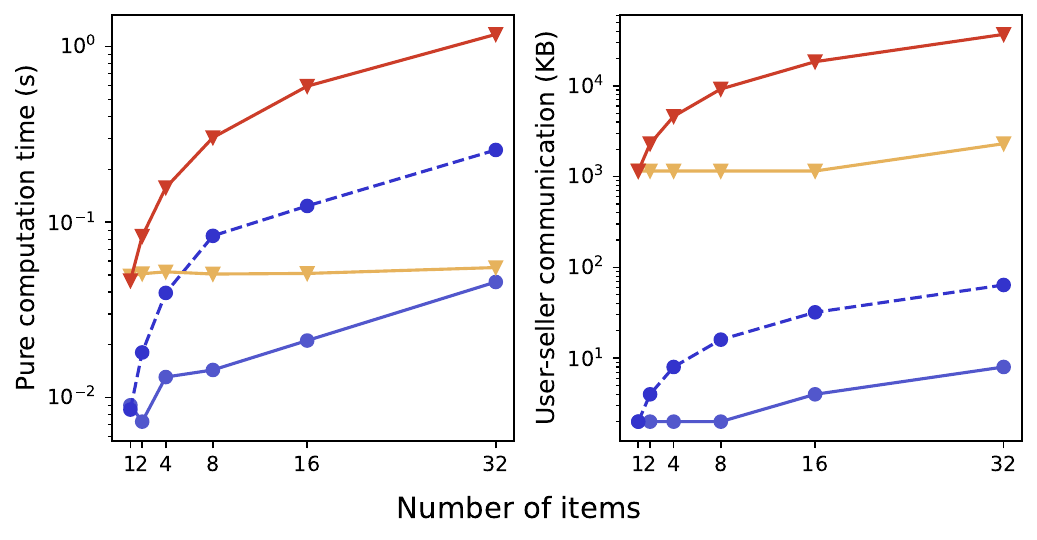}
        \caption{Inference cost vs. the number of items.}
        \label{fig:infer-item}
    \end{subfigure}
    \begin{subfigure}{0.48\textwidth}
        \includegraphics[width=\textwidth]{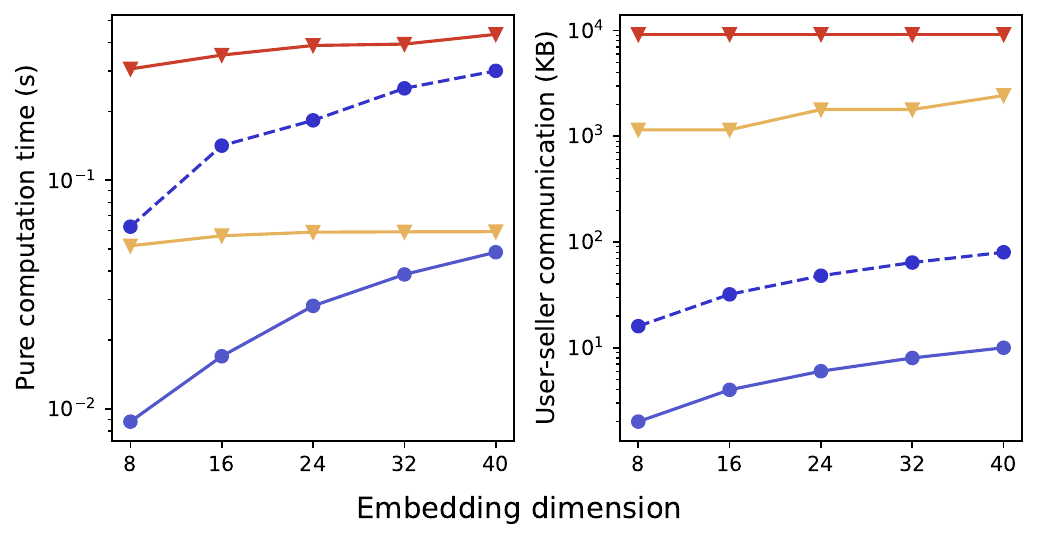}
        \caption{Inference cost vs. embedding dimension.}
        \label{fig:infer-dim}
    \end{subfigure}
    \label{fig:inference}
    \caption{Results on inference tasks.}
\end{figure*}


\begin{figure*}[ht]
    \includegraphics[width=\textwidth]{figs/legend.pdf}
    \centering
    \begin{subfigure}{0.48\textwidth}
        \includegraphics[width=\textwidth]{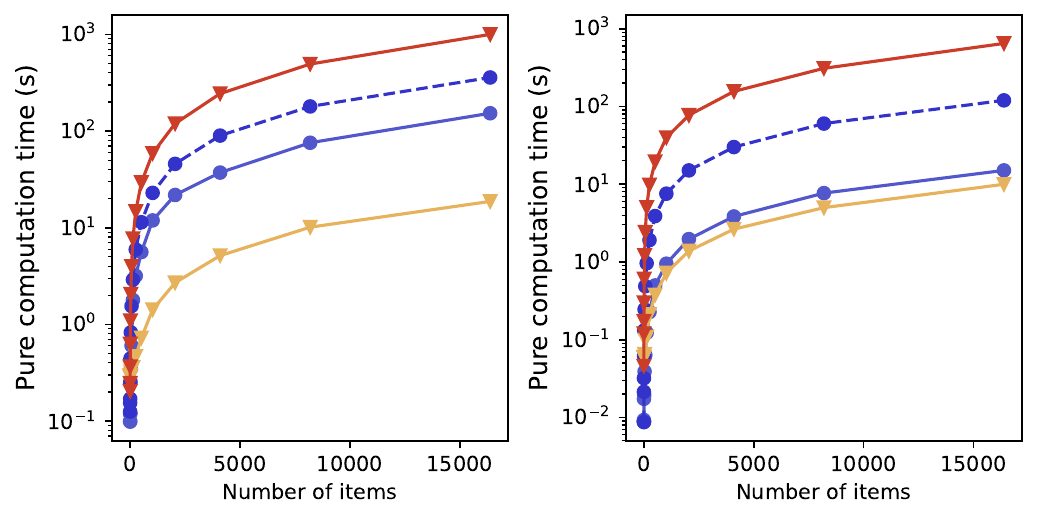}
        \caption{Training and inference pure computation time}
        \label{fig:limitation-time}
    \end{subfigure}
    \begin{subfigure}{0.48\textwidth}
        \includegraphics[width=\textwidth]{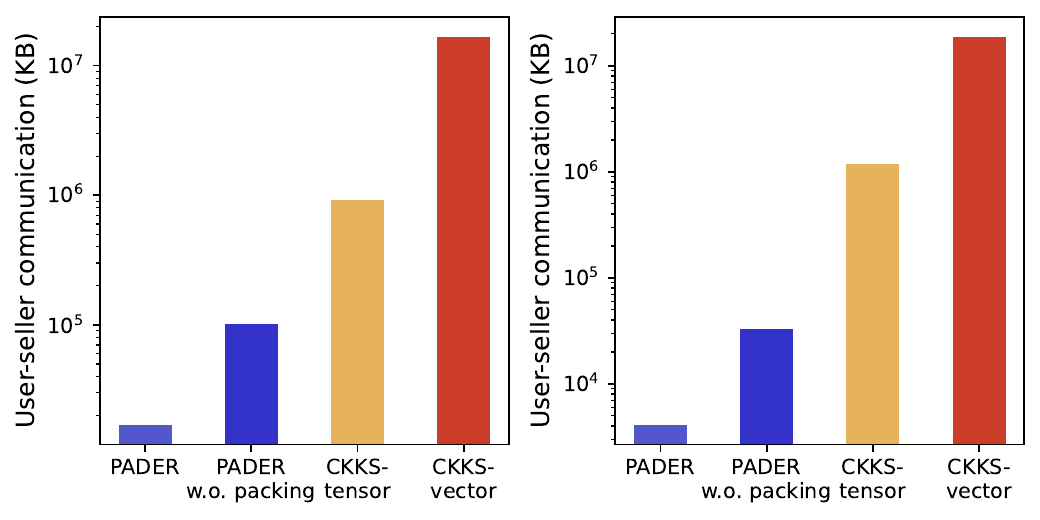}
        \caption{Training and inference communication volume}
        \label{fig:limitation-comm}
    \end{subfigure}
    \caption{Results on a large number of items. (a)~depicts the variation in pure computation time for training and inference as the item count scales from $2^1$ to $2^{14}$. (b)~compares the communication volume for training and inference at an item count of $2^{14}$.}
    \label{fig:limitation}
\end{figure*}

\subsubsection{Different Numbers of Rated Items}
The results on different numbers of rated items are reported in \Cref{fig:items},
In the experiment, we fix the embedding dimension to be 8 and the number of social connections to be 10, and vary the number of rated items from $1$ to $2^{5}$.

%
The results demonstrate that, for a small number of items, PADER achieves lower pure computation time compared to CKKS methods. However, when the number of items increases beyond $2^{5}$, the pure computation time of CKKS-tensor becomes superior to that of PADER. Notably, PADER's communication size is over two orders of magnitude smaller than that of CKKS methods. (In \Cref{fig:limitation}, we further examine scenarios with an exceptionally large number of items, which highlights a limitation of PADER.)

Also, PADER is more efficient than the bipartite computation based on Paillier AHE, and data packing also significantly reduces the computation and communication cost.
Hence, under lower bandwidth, PADER has a greater advantage over other methods.

\subsubsection{Different Embedding Dimensions}
The results on different embedding dimensions are reported in \Cref{fig:dimensions}.
In the experiment, we fix the number of rated items to be 8 and the number of social connections to be 10, and vary the embedding dimension in $\{8, 16, 24, 32, 40\}$.

Results show that our proposed PADER has the lowest costs like in the previous experiment.
%
It is noteworthy that with the CKKS methods, the costs remain nearly constant as the embedding dimension rises. The reason is that, under the common  setting, 4096 elements can be packed into a single ciphertext.
However, the results show that with the embedding dimension of up to 40, PADER still maintains significantly lower costs.
Especially in the communication cost, PADER constantly has at least a two-magnitude advantage compared with CKKS methods.
%




\subsection{Inference Benchmarks}
The inference in the SoReg model and other MF-based models is simply computing dot-products between the user embedding and item embedding.
%
%
We report the inference cost with regard to the number of items and embedding dimension on \Cref{fig:infer-item} and \Cref{fig:infer-dim} respectively.
For the implementation of CKKS tensor computation, the packing dimension is chosen according to the size of different dimensions, e.g., if the number of items are larger than the embedding dimension, then we pack the same embedding position of different items together.

The results show that, for a small number of items, PADER achieves lower pure computation time compared to CKKS methods. 
Also, the PADER is still more efficient in terms of communication size.
The communication size of PADER is smaller than CKKS methods by at least 100 times in all cases.
In \Cref{fig:limitation}, we further examine scenarios with an exceptionally large number of items, which highlights a limitation of PADER.

\subsection{Benchmark on Real Data}
\begin{table}[h]
\centering
\setlength{\tabcolsep}{5pt}
\begin{tabular}{cccccc}
\toprule
Dataset  & Users    &Items  & Ratings   &  Relations  & Ratio            \\ \midrule
Epinions & 49K      & 139K  & 664K      & 487K          & 90\%             \\
Douban   & 2964     & 40K   & 894K      & 35K           & 10\%             \\
\bottomrule
\end{tabular}
\caption{Dataset details. `Ratio‘ means the portion of data selected for training and the rest is used for testing.}
\label{tab:dataset}
\end{table}

To verify the feasibility of PADER in real-world data, We also conduct experiments on the Epinions dataset~\cite{hamedani2021trust_rec_epinions} and the Douban dataset~\cite{mahao2011social_regularization}.
Dataset details are in \Cref{tab:dataset}.

\subsubsection{Computation and Communication Cost}
To show the efficiency of PADER on real data, we report the pure computation time and communication size of PADER and comparison methods regarding the number of processed users in \Cref{fig:real-benchmark}. For a clearer visual comparison, the pure computation time for the bipartite method without packing is shown only up to 4000 users. 

The results show that in real-world datasets, PADER still maintains high efficiency compared with other methods, both in computation and communication.
\begin{figure}
    \centering
    \includegraphics[width=\linewidth]{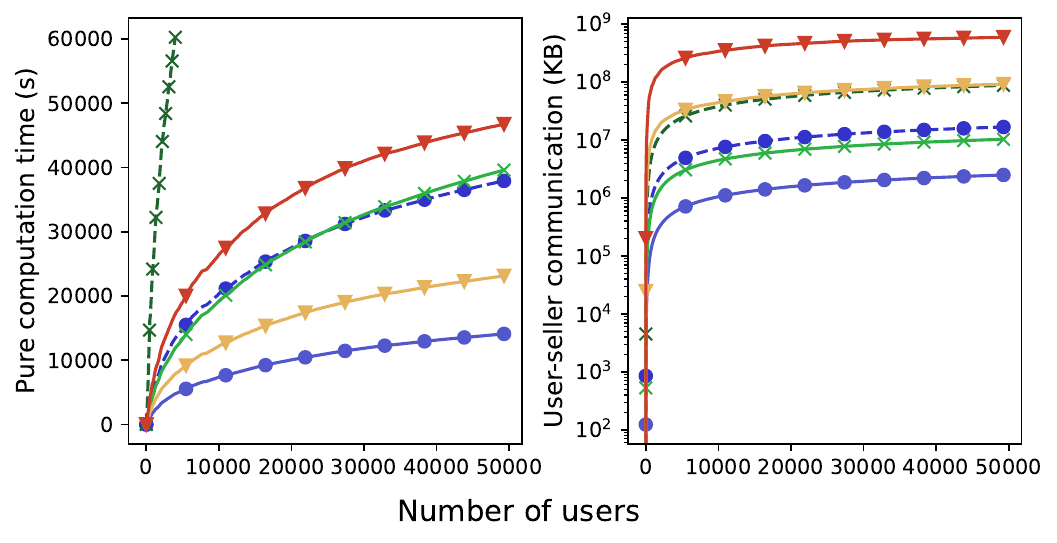}
    \caption{Benchmark on the Epinions dataset.}
    \label{fig:real-benchmark}
\end{figure}

\subsubsection{Model Performance}
\begin{figure}[ht]
    \centering
    \includegraphics[width=\linewidth]{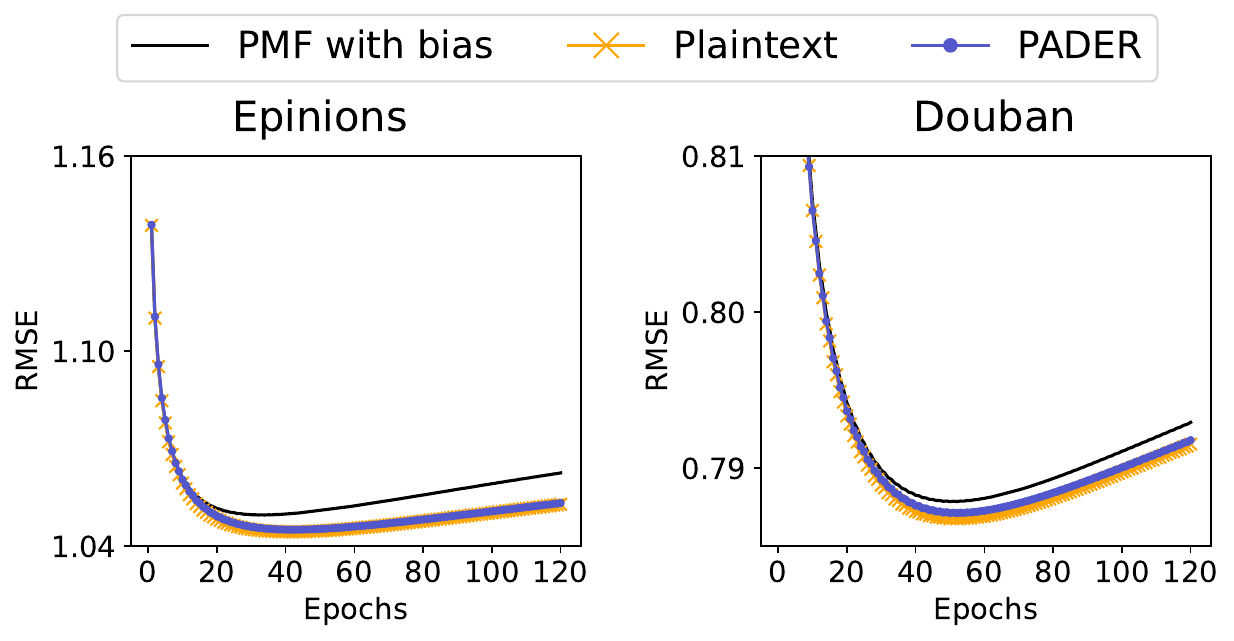}
    \caption{Validation loss in Epinions and Douban.}
    \label{fig:loss-curves}
\end{figure}
We report the validation loss curves of both datasets in \Cref{fig:loss-curves}, and compare PADER with the plaintext SoReg model and PMF model, using the RMSE (Rooted Mean Squared Error) as the metric.
We compare our secure computed SoReg model with the plaintext SoReg model, along with the classical PMF model with bises~\cite{koren2009mf} model.
For both methods, we tune the user, item, and social (if any) regularization coefficients in the range of $\{0, 0.25, 0.5, 1, 2, 4\}$ and report the best result.
The learning rate for SGD is set to 0.003.
During each iteration, up to 8 items and 10 social connections are selected.

We can see that the curves of plaintext SoReg and PADER almost overlap, meaning that our secure computation method does not affect the model accuracy.
Also, both plaintext and secure computation SoReg show improvement in RMSE compared with the biased PMF~\cite{2007pmf,koren2009mf} baseline.

\subsection{Limitation}
In this subsection, we further compared the performance of the PADER method and the CKKS method in the scenario where the number of items is extremely large. The training and inference costs are presented in Figure \ref{fig:limitation}. The pure computation time is demonstrated with the number of items increasing from \(2^1\) to \(2^{14}\), and the communication volume reported is the one when the number of items is equal to \(2^{14}\). 

The results show that, in terms of the pure computation time, the CKKS-tensor method exhibits the best performance. This is because when the number of items is large, it is favorable for the CKKS method as it can pack more plaintexts into one ciphertext for computation. However, when considering the communication size, the proposed PADER method is the most optimal. For example, in terms of the training performance, the communication sizes of PADER and CKKS-tensor are 16.5MB and 95.3MB respectively, with PADER being approximately 50 times smaller than CKKS-tensor.

\subsection{Discussion}
%
In experiments, in terms of communication size, PADER shows superior efficiency than all other methods in most cases, which makes PADER practical in real-life applications where the network bandwidth is limited.
Through comparison between methods with and without packing, we can see that the proposed optimal packing method greatly increases the efficiency by a factor of 8, and the secure computation protocol also overcomes the inefficiency brought by the naive bipartite computation method.

The key advantage of PADER in comparison to CKKS lies in its smaller ciphertext size. In experiments, the communication volume of PADER is typically around 50 to 100 times smaller than that of CKKS. Additionally, when the number of items is relatively small, PADER outperforms CKKS in terms of pure computation time.
\section{Conclusion}
In this paper, we propose PADER, a secure decentralized social recommendation scheme, which is based on the classic social regularization model and Additive Homomorphic Encryption (AHE).
We notice that using a bipartite decomposition can turn the secure training problem into a simple ciphertext-plaintext multiplication which can be trivially done by AHE.
However, simple bipartite computation can be less efficient in certain cases.
To improve the efficiency of the AHE computation, we propose protocols for securely computing polynomials of two parties in arbitrary order, along with an optimal data packing method.
%
Experiment results show that PADER achieves a speedup of $2{\footnotesize \sim}40 \times$ compared with the bipartite computation without data packing, while also being much faster than the state-of-the-art Fully homomorphic encryption method CKKS when considering communication consumption. 
Moreover, the model performance remains the same compared with plaintext training.

\bibliographystyle{IEEEtran}
\bibliography{refs}

\end{document}